\newtheorem{lemma}{Lemma}
\newtheorem*{remark}{Remark}
\newtcolorbox{highlighted}{colback=yellow,coltext=red,breakable}
\begin{document}
\title{Inter-cluster Cooperation for Wireless D2D Caching Networks}
 
\author{Ramy Amer,~\IEEEmembership{Student~Member,~IEEE,} M.~Majid~Butt,~\IEEEmembership{Senior~Member,~IEEE,}  Mehdi~Bennis,~\IEEEmembership{Senior~Member,~IEEE,}~and~Nicola~Marchetti,~\IEEEmembership{Senior~Member,~IEEE}
 
\thanks{The material in this paper is presented in part at Globecom 2017 \cite{Delay-Analysis}.}
\thanks{This publication has emanated from research conducted with the financial support of Science Foundation Ireland (SFI) and is co-funded under the European Regional Development Fund under Grant Number 13/RC/2077.}
}

\maketitle			
\begin{abstract} 
Proactive wireless caching and device to device (D2D) communication have emerged as promising techniques for enhancing users' quality of service and network performance. In this paper, we propose a new architecture for D2D caching with inter-cluster cooperation. We study a cellular network in which users cache popular files and share them with other users either in their proximity via D2D communication or with remote users using cellular transmission. We characterize the network average delay per request from a queuing perspective. Specifically, we formulate the delay minimization problem and show that it is NP-hard. Furthermore, we prove that the delay minimization problem is equivalent to the minimization of a non-increasing monotone supermodular function subject to a uniform partition matroid constraint. A computationally efficient greedy algorithm is proposed which is proven to be locally optimal within a factor $(1 - e^{-1})\approx 0.63$ of the optimum. We analyze the average per request throughput for different caching schemes and conduct the scaling analysis for the average sum throughput. We show how throughput scaling depends on video content popularity when the number of files grows asymptotically large. Simulation results show a delay reduction of 45\% to 80\% compared to a D2D caching system without inter-cluster cooperation.

\begin{IEEEkeywords}
D2D caching, queuing theory, delay analysis, scaling analysis, throughput analysis.
\end{IEEEkeywords}
\end{abstract}

\IEEEpeerreviewmaketitle

\section{Introduction}
The rapid proliferation of mobile devices has led to unprecedented growth in wireless traffic demands. A typical approach to deal with such demand is by densifying the network. For example, macrocells and femtocells are deployed to enhance the capacity and attain a good quality of service (QoS) by bringing the network closer to the user. Recently, it has been shown that only a small portion of multimedia content is highly demanded by most of the users. This small portion forms the majority of requests that come from different users at different times, which is referred to as $\textit {asynchronous content reuse}$ \cite{mono}. 

Caching the most popular content at various locations of the network edge has been proposed to avoid serving all requests from the core network through highly congested backhaul links
 \cite{wr,Living_on_the_edge,Cache-enabled-small-cell}. From the caching perspective, there are three main types of networks, namely, caching on femtocells in small cell networks, caching on remote radio heads (RRHs) in cloud radio access networks (RANs), and caching on mobile devices \cite{femtocell_mehdi, ran, D2D}. One approach to overcome the limitations of the finite capacity backhaul links in the Cloud-RANS, where low energy base stations (BSs) are deployed over a small geographical area and are connected to the cloud, is to introduce local storage caches at the BSs, in which the popular files are stored locally in order to reduce the load of the backhaul links \cite{ran}. 
For the small cell networks, caching the most popular content at the network edge (the small BSs) is a promising solution to reduce the traffic and the energy consumption over the finite capacity backhaul links \cite{femtocell_mehdi}.

In this article, we focus on device caching solely. The architecture of device caching exploits the large storage available in modern smartphones to cache multimedia files that might frequently be requested by the users. The users' devices exchange multimedia content stored on their local storage with nearby devices \cite{D2D}. Since the distance between the requesting user and the caching user (a user who stores the file) will be small in most cases, device to device (D2D) communication is commonly used for content transmission \cite{D2D}.
In this context, Golrezaei \textit{et al.} \cite{D2D1} proposed a novel architecture to improve the throughput of video transmission in cellular networks based on the caching of popular video files in base station controlled D2D communication. The analysis of this network is based on the subdivision of a macrocell into small virtual clusters, such that one D2D link can be active within each cluster. Random caching is considered where each user caches files at random and independently, according to a caching distribution.

Different cooperation strategies in D2D networks are proposed in the literature. As an example, in \cite{6952682}, the authors proposed a cooperative D2D communications framework in order to combat the problem of congestion in crowded communication environments. The authors allowed a D2D transmitter to act as an in-band relay for a cellular link and at the same time transmit its data by employing superposition coding in the downlink. It is shown that cooperation between the cellular link and D2D transmitter helps increase the number of connections per unit area with the same spectrum usage. In the area of D2D caching, the authors in \cite{D2D3} proposed an opportunistic cooperation strategy for D2D transmission by exploiting the caching capability at the users to control the interference among D2D links.
The authors considered an overlay inband D2D communication, divided the D2D users into clusters, and assigned different frequency bands to cooperative and non-cooperative D2D links. The cluster size and bandwidth allocation are further optimized to maximize the network throughput.

The analysis of wireless caching networks from the resource allocation perspective is widely discussed in the literature. For instance, in \cite {Wireless-video-content}, the authors showed how distributed caching and collaboration between users and femtocells (helpers) can significantly improve throughput without suffering from the backhaul bottleneck problem common to femtocells. The authors also investigated the role of collaboration among users - a process that can be interpreted as the mobile devices playing the role of helpers also. This approach allowed an improvement in the video throughput without the deployment of any additional infrastructure. 
Due to the dependence between content cache placement and resource allocation in wireless networks, the joint problem of caching and resource allocation is studied in many works. As an example, Zhang \textit{et al.} in \cite {Efficient_Scheduling} proposed a single-hop D2D-assisted wireless caching network, where popular files are randomly and independently cached in the memory of end users. The joint D2D link scheduling and power allocation problem is formulated to maximize the system throughput. 
Following a similar approach, Chen \textit{et al.} in \cite{D2D2} studied the joint optimization of cache content placement and scheduling policies to maximize the so-called offloading probability. 
The successful offloading probability is defined as the probability that a user can obtain the desired file in the local cache or via a D2D link with data rate larger than a given threshold. The authors obtained the optimal scheduling factor for a random scheduling policy that controls interference in a distributed manner and proposed a low complexity solution to
compute caching distribution.

Motivated by the remarks from the above discussion, i.e., backhaul links being highly congested, the geometric distribution of the users as groups in clusters, and the small memory sizes of a group of users colocated in the same cluster, we propose a novel D2D caching architecture with inter-cluster cooperation. We propose a system in which a user in a given cluster can search its requested files either in the local cluster or any of the remote clusters. We show that allowing inter-cluster collaboration via cellular communication achieves both user and system performance gains. From the user perspective, the average delay per request is reduced when downloading files from a remote cluster instead of serving files from the core network. From the system perspective, the heavy burden on backhaul links is alleviated by decreasing the number of requests that are served directly from the core network. 
 From a resource allocation perspective, similar to the work performed in \cite{Wireless-video-content, D2D3}, we analyze the network average delay and throughput per user request for the proposed inter-cluster cooperative caching system under different caching schemes and show how the network performance is significantly improved. 
To the best of our knowledge, none of the works in the literature dealt with the performance analysis of D2D caching networks with inter-cluster cooperation.

The main contributions of this article are summarized as follows:
\begin{itemize}
\item We study a D2D caching system with inter-cluster cooperation from a queueing theory perspective. We formulate the network average delay minimization problem in terms of cache placement. The delay minimization problem is then shown to be non-convex, and it can be reduced to a well-known 0 - 1 knapsack problem which is NP-hard.
\item A closed-form expression of the network average delay is derived under the policy of caching popular files. Moreover, a locally optimal greedy caching algorithm is proposed whose delay is within a factor $(1 - e^{-1})$ of the global optimum. Results show that the delay can be significantly reduced by allowing D2D caching with inter-cluster cooperation. 
\item We derive a closed form expression for the average throughput per request for the proposed inter-cluster cooperating scheme. Moreover, we conduct the asymptotic analysis for the average sum throughput when the content library size grows to infinity.	
The result of the scaling analysis shows that the upper bound for the network average sum throughput decreases when the library size increases asymptotically, and the rate of this decrease is controlled by the popularity of files.

\end{itemize} 

The rest of the paper is organized as follows. The system model is presented in Section II. In Section III, we formulate the problem and perform the delay analysis of the system. In Section IV, the content caching schemes are studied. Section V provides the throughput analysis. Finally, we discuss the simulation and analytical results in Section VI and conclude the paper in Section VII.

\section{System Model}
\label{sysmodel}
\begin{figure}
\centering
\includegraphics[width=0.5\textwidth]{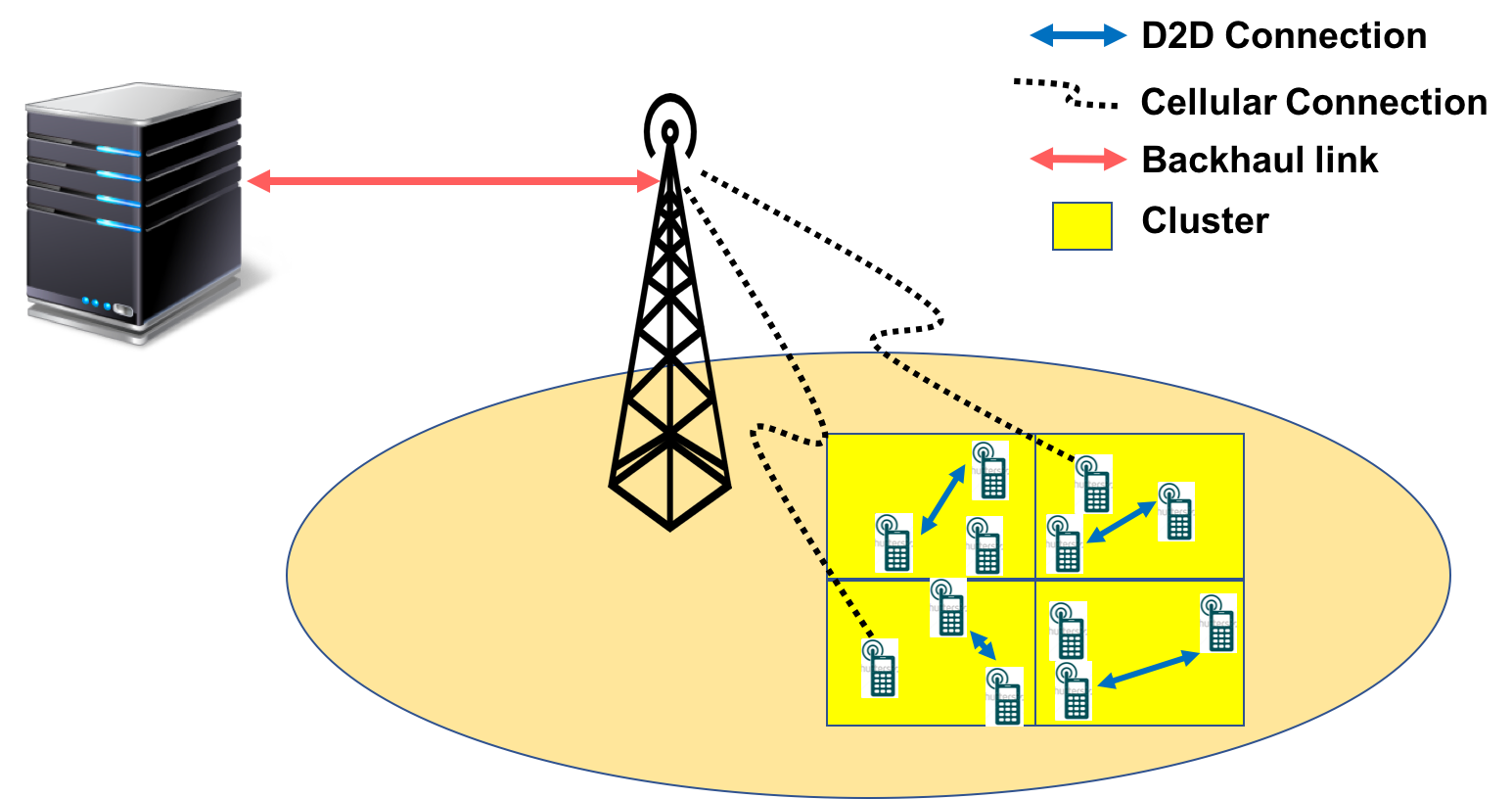}				
\caption {Schematic diagram of the proposed system model. A cellular cell is divided into square clusters, where users in all clusters can download their requested files using D2D, cellular, or backhaul communication.}
\label{Network Model}
\end{figure}

\subsection{Network Model}
In this subsection, we describe our proposed D2D caching network with inter-cluster cooperation. Fig.~\ref{Network Model} illustrates the system layout. A cellular network consists of a small base station (SBS) and a set of users $\mathcal{U}= \{1,\dots, n\}$ placed uniformly in the cell. The cell is divided into a set of equally sized clusters $\mathcal{K}= \{1,\dots, K\}$. For mathematical convenience, we assume that the number of users per cluster is $y = n/K$ users, as in \cite{D2D3} and the reference therein. Users in the same cluster can communicate directly using low power high rate D2D communication in a dedicated frequency band for D2D transmission.

Each user $u \in \mathcal{U}$ requests a file $f$ from a file library $\mathcal{F}= \{1,\dots, m\}$ independently and identically, according to a given request probability mass function. It is assumed that each user can cache up to $M$ files, and for the caching problem to be non-trivial, it is assumed that $M < m$. From the cluster perspective, we assume to have a cluster's virtual cache center (VCC) formed by the union of devices' storage in the same cluster, which caches up to $N$ files, i.e., $N = (n/K)M$.

We assume that the D2D communication does not interfere with communication between the BS and users. We also assume that all D2D links share the same time-frequency transmission resource within one cell. Multiple transmissions on those resources are possible since the distance between requesting users and users with the stored file will typically be small. Furthermore, there should be no interference by other transmissions on an active D2D link. To achieve this, the cell is divided into smaller areas, which we denoted as clusters. To avoid intra-cluster interference, only one such communication per cluster is allowed.\footnote{We adopt a simplified PHY-layer model in this work.} Users in the same cluster are assumed to be served in a round-robin manner.

 We define three modes of operation according to how a request for content $ f \in \mathcal{F}$ is served:
\begin{enumerate}
\item {\textbf{Local cluster mode ($M_{lc}$ mode):}} Requests are served from the local cluster.  Files are downloaded from nearby users via a single-hop D2D communication. In this mode, we neglect self-caching, i.e., the event when a user finds the requested file in its internal cache with zero delay. Within each cluster, the BS can help devices find their requested content by broadcasting signals containing the content replication ratio.
\item \textbf{Remote cluster mode ($M_{rc}$ mode):} Requests are served from any of the remote clusters via inter-cluster cooperation. The BS fetches the requested content from a remote cluster, then delivers it to the requesting user by acting as a relay in a two-hop cellular transmission. The BS assists in content dissemination in the {\em ``remote cluster mode"} by relaying the content between different clusters.
\item \textbf{Backhaul mode ($M_{bh}$ mode):} Requests are served directly from the backhaul. The BS obtains the requested file from the core network via the backhaul link and then transmits it to the requesting user.
\end{enumerate}

In each cluster, we assume that the stream of user requests are served sequentially based on first in first out (FIFO) criterion. The BS receives all requests and works as a coordinator to establish the file transfer between the requesting user (a user who requests the file) and the serving node (another user who caches the file or a caching server in the core network).
The BS keeps track of which devices can communicate with
each other and which files are cached on each device. Such
BS-controlled D2D communication is more efficient and more
acceptable to spectrum owners if the communication occurs in
a licensed band as compared to traditional uncoordinated peer-to-peer communications \cite{caire2015femtocaching}. To serve a request for file $f$ in cluster $k \in \mathcal{K}$, first, the BS searches the VCC of cluster $k$. If the file is cached, it will be delivered from the local VCC ($M_{lc}$ mode). We assume that the BS has all the information about cached content in all clusters, such that all file requests are sent to the BS, then the BS replies with the address of the caching user from whom the file will be retrieved. 

If a file is not cached locally in cluster $k$ but cached in any of the remote clusters, it will be fetched from a randomly chosen cooperative cluster ($M_{rc}$ mode), instead of downloading it from the backhaul. Unlike multi-hop D2D cooperative caching discussed in \cite{7820112}, in our work cooperating clusters are assumed to exchange cached files using a  two-hop cellular communication link through the BS, such that the D2D band is dedicated only to the intra-cluster communication. Hence, all the inter-cluster communication is performed in a centralized manner through the BS. Finally, if the requested file has not been cached in any cluster $j \in \mathcal{K}$ in the cell, it can be downloaded from the core network via the backhaul link ($M_{bh}$ mode).
 The selection of the three modes of operation is conducted in a prioritized order from the local cluster, from the remote cluster, or finally from the core network through the backhaul link as a last resort. 

Serving files sequentially according to the above three modes is based on the assumption that the BS has a capacity limited wired backhauling, such that the average delay per request is decreased when allowing inter-cluster cooperation.
Otherwise, if the backhaul is not a bottleneck, e.g., optical fiber or millimeter wave backhaul links are available, requests for files not cached in the local cluster are served directly from the core network through the high capacity backhaul link. The analysis in this paper relies on a well-known grid-based clustering model \cite{D2D1}, i.e., no specific underlying physical model or parameters are assumed. Therefore, the obtained design/results, e.g., design of caching scheme and the performance of the greedy algorithm, can be applied to similar scenarios with three prioritized paths (modes) for file downloading. For example, on-board users, such as on a plane or a ship, can obtain requested files from neighboring users via Bluetooth (local cluster mode), from a remote user through an access point \cite{altman2013coding} acting as a relay (remote cluster mode), or finally from the backhaul, which is the least preferred option. As another example, in the case of connecting users through unmanned aerial vehicles (UAVs) \cite{sundaresan2018skylite}, serving files can be prioritized as follows. A file is received from a neighboring user via D2D communication (local cluster mode), from a remote user through the UAV acting as a relay (remote cluster mode), or from the backhaul to the core network through the UAV as a last resort.

\subsection{Content Placement and Traffic Characteristics}
\label{sec}
We use a binary matrix $\textbf{C}= [c_{k,f}]_{K\times m}$ with $c_{k,f} \in \{0, 1\}$ to denote the cache placement in all clusters, where $c_k,_f = 1$ indicates that content $f$ is cached in cluster  $k$. Fig.~\ref{vcc}  shows the assumed users' traffic model in a cluster $k$, modeled as a \textit{multiclass processor sharing queue} (MPSQ) with arrival rate $\lambda_k$, and three serving processors representing the three transmission modes. According to the MPSQ definition \cite{MPSQ}, each transmission mode is represented by an M/M/1 queue with Poisson arrival rate and exponential service rate. A graphical interpretation of the content cache placement is shown in Fig.~\ref{bi}. The content caching policy is defined by a bipartite graph $\mathcal{Y} = (\mathcal{K},\mathcal{F}, \mathcal{E})$, where edges $(k,f)\in \mathcal{E}$ denote that content $f$ is cached in the VCC of cluster $k$. 

\begin{figure}
\begin{center}
\includegraphics[width=0.3\textwidth]{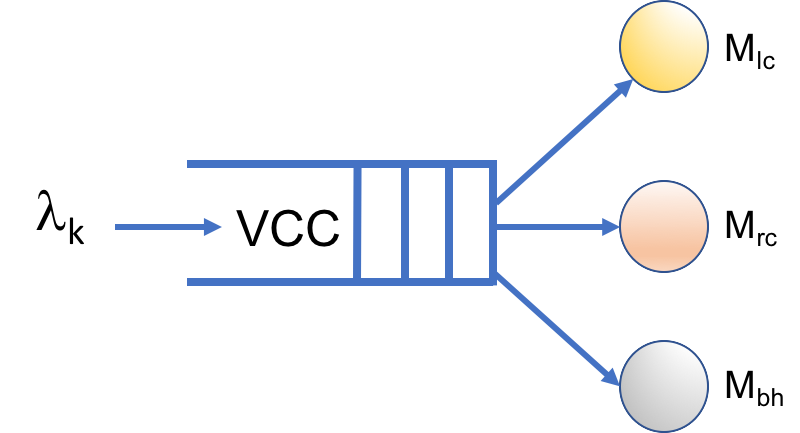}			
\caption {The users' traffic model in a cluster $k$, with a cache center VCC, is modeled as a \textit{multiclass processor sharing queue} (MPSQ).}
\label{vcc}
\end{center}
\end{figure}
		
\begin{figure}
\begin{center}
\includegraphics[width=0.4\textwidth]{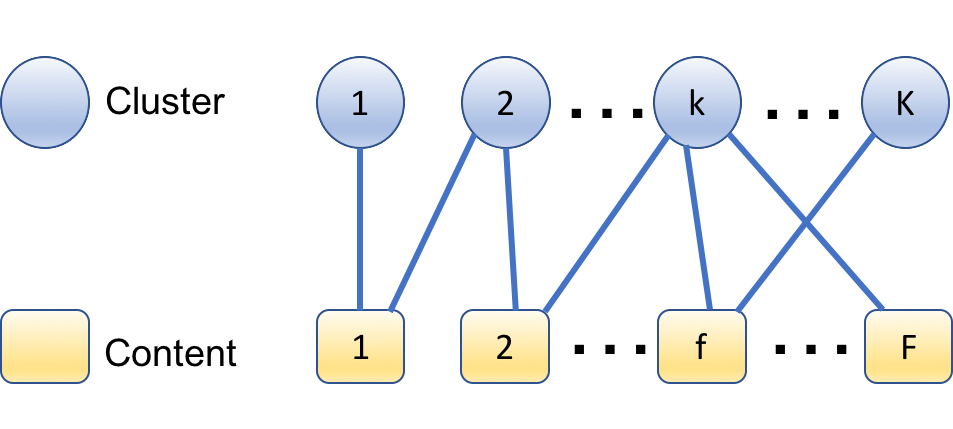}
\caption {An example of the content cache placement modeled as a bipartite graph indicating how files are cached in clusters.}
\label{bi}
\end{center}
\end{figure}

If a user in cluster $k$ requests a locally cached file $f$ (i.e., $c_k,_f = 1$), it will be served by the local cluster mode with an average rate $R_{D}$. However, if the requested file is not cached locally and cached in any of the remote clusters, i.e., when $c_k,_f = 0$  and $\sum_{j\in \mathcal{K}\setminus \{k\}}c_j,_f \geq 1$, it will be served by the remote cluster mode.

We denote the rate for the remote cluster mode by $R_{WL}$, accounting
for the average sum transmission rate between the cooperating
clusters through the BS. Accordingly, $R_{WL}$ is shared between clusters simultaneously served by the remote cluster mode. Finally, requests for files that are not cached in the entire cell, i.e., when $\sum_{j=1}^{K}  c_k,_f = 0$, are served via the backhaul mode with an average sum rate $R_{BH}$. We assume that $R_{BH} << R_{WL}$, such that the part of the cellular rate allocated to the users served by the backhaul mode is neglected for the delay analysis. $R_{BH}$ is assumed to be the effective rate from the core network to the user using BS.

Due to traffic congestion in the core network and the transmission delay between cooperating clusters, we assume that the aggregate transmission rates for the above three modes are ordered such that $R_D > R_{WL} >  R_{BH}$. We also assume that the content size $S_f$ is exponentially distributed with mean $\overline{S}$  bits. Hence, the corresponding request service times of the three transmission modes also follow an exponential distribution with means $\tau_{lc} = \frac{\overline{S}}{R_{D}}$ sec, $\tau_{rc} = \frac{\overline{S}}{R_{WL}}$ sec, and $\tau_{bh} = \frac{\overline{S}}{R_{BH}}$ sec, respectively.
\section{Problem Formulation}
\label{prob}
In this section, we characterize the network average delay on a per request basis from the global network perspective.
Specifically, we study the request arrival rate and the traffic dynamics from a queuing theory perspective and get a closed form expression
for the network average delay.
 \subsection{File Popularity Distribution}
We assume that the popularity distribution of files in all clusters follows a Zipf's distribution with skewness order $\beta$ \cite{zipf}. However, it is assumed that the content may vary across clusters. This is inspired by the fact that, for instance, users in a library may be interested in an entirely different set of files from the users in a sports center.
Our assumption for the popularity distribution is extended from \cite{basic_principle}, where the authors explained that the scaling of popular files is sublinear with the number of users.\footnote
{The number of popular files increases with the number of users with a rate slower than the linear polynomial rate, e.g., the logarithmic rate.}

  To illustrate, if user 1 and user 2 are interested in a set of files with size $m_0$, then the first $m_0/2$ files of user 2 are common with user 1 and $m_0/2$ are the new ones. User 3, in turn, shares $2m_0/3$ files with users 1 and 2, and has $m_0/3$ new files, etc. The union of all demanded (popular) files by $n$ users is $m = m_0 (1 + \frac{1}{2} + \frac{1}{3} + \dots)= m_0 \sum_{i=1}^{n}\frac{1}{i} \approx m_0$ log $n$. Hence, the library size increases sublinearly with the number of users. In this work, we assume that the scaling of the library size is sublinear with the number of clusters. The cell is divided into clusters with a small number of users per cluster, such that users in the same cluster are assumed to request files according to the same file popularity distribution function (i.e., users in the same cluster are interested in the same set of popular files).

The probability that a file $f$ is requested in cluster $k$, with $m_0$ highly demanded files in each cluster, follows a Zipf distribution written as \cite{zipf},
\begin{equation}
P_k,_f = \frac{(  f - \frac{k-1}{k}m_0 a  +  (m - \frac{k-1}{k}m_0)b  )^{-\beta}}{\sum_{i=1}^{m}i^{-\beta}},
\label{popularity eqn}
\end{equation}
where $a = \mathds{1}(f>\frac{k-1}{k}m_0)$ and $b = \mathds{1}(f \leq\frac{k-1}{k}m_0)$, $\frac{k-1}{k}m_0$ is the order of the most popular file in the $k-$th cluster, and $\mathds{1}(.)$ is the indicator function.
When $k=1$, we get $P_1,_f = \frac{(  f  )^{-\beta}}{\sum_{i=1}^{m}i^{-\beta}}$ for the first cluster, which is the Zipf's distribution with the most popular file $f = 1$. For example, if $m_0 = 60$, then $P_2,_f = \frac{(  f  - 30a + (m - 30)b )^{-\beta}}{\sum_{i=1}^{m}i^{-\beta}}$ for the second cluster, which is the Zipf's distribution with the most popular file $f = \frac{m_0}{2} + 1 = 31$; also $f = \frac{2m_0}{3} + 1 = 41$ is the most popular file in the third cluster, and so on. 
\subsection{Arrival and Service Rates}
The arrival rates for the three communication modes $M_{lc}$, $M_{rc}$, and $M_{bh}$ in a cluster $k$ are denoted respectively by $\lambda_{k,lc}$, $\lambda_{k,rc}$, and $\lambda_{k,bh}$ while the corresponding service rates are represented by $\mu_{lc}$, $\mu_{rc}$, and $\mu_{bh}$. For the local cluster mode, we have
\begin{equation}
\lambda_k,_{lc} = \lambda_k \sum_{f=1}^{m}  P_k,_f c_k,_f,
\end{equation}
where $\sum_{f=1}^{m}  P_k,_f c_k,_f$ is the probability that the requested file is cached locally in cluster $k$. The corresponding service rate is $\mu_{lc} = \frac{1}{\tau_{lc}}$. For the remote cluster mode, the request arrival rate is defined as
\begin{equation}
\lambda_k,_{rc} = \lambda_k \sum_{f=1}^{m}  P_k,_f (1 - c_k,_f)\mathrm{min}\Big(\sum_{j \in \mathcal{K}\setminus \{k\}}c_{j,f},1\Big), 
\end{equation}
where min$(\sum_{j \in \mathcal{K}\setminus \{k\}}c_{j,f},1)$ equals one only if the content $f$ is cached in at least one of the remote clusters. Hence, $\sum_{f=1}^{m}  P_k,_f (1 - c_k,_f)$min$(\sum_{j \in \mathcal{K}\setminus \{k\}}c_{j,f},1)$ is the probability that the requested file $f$ is cached in any of the remote clusters given that it is not cached in the local cluster $k$. The corresponding service rate is $\mu_{rc} = \frac{1}{\tau_{rc} N_a}$, where $N_a$ represents the number of cooperating clusters simultaneously served by the remote cluster mode, i.e, the number of clusters which share the cellular rate.

Finally, for the backhaul mode, the request arrival rate is written as
\begin{equation}
\lambda_k,_{bh} = \lambda_k \sum_{f=1}^{m}  P_k,_f  \prod \limits_{k=1}^{K} (1 - c_k,_f),
\end{equation}
where $\sum_{f=1}^{m}  P_k,_f  \prod \limits_{k=1}^{K} (1 - c_k,_f)$ is the probability that the requested file $f$ is not cached entirely in the cell, so this content could be downloaded only from the core network. The corresponding service rate is $\mu_{bh} = \frac{1}{\tau_{bh} N_b}$, where $N_b$ is defined as the number of clusters simultaneously served via the backhaul mode.

The traffic intensity of a queue is defined as the ratio of mean service time to mean inter-arrival time. We introduce $\rho_k$ as a metric of the traffic intensity at cluster $k$ as 			

\begin{equation}
\rho_k = \frac{\lambda_k,_{lc}}{\mu_{lc}} +  \frac{\lambda_k,_{rc}}{\mu_{rc}}  +  \frac{\lambda_k,_{bh}}{\mu_{bh}}	
\label{rho_k}
\end{equation}
Similar to \cite{delayequation}, we consider $\rho_k < 1 $ as the stability condition, otherwise, the overall delay will be infinite. The traffic intensity at any cluster is simultaneously related to the request arrival rate and the transmission rates of the three serving modes.

\subsection{Network Average Delay}	
In \cite{delayequation}, it is proven that the mean queue size for an MPSQ with arrival rate $\lambda$ [sec$^{-1}$] and traffic intensity $\rho$, is 
\begin{equation}
\rho + \frac{\lambda\sum_{i}\frac{\lambda_i}{\mu_i^2}}{1 - \rho}, \nonumber
\end{equation}
where $\lambda_i$ and $\mu_i$ are respectively the arrival and service rates of a service group $i$. Given the fact that the average delay equals the mean queue size divided by the arrival rate, substituting the above expression to calculate the average delay per request in a cluster $k$ yields    
\begin{eqnarray}
D_k &= \frac{\rho_k}{\lambda_k} + \frac{\frac{\lambda_k,_{lc}}{\mu_{lc}^2} +  \frac{\lambda_k,_{rc}}{\mu_{rc}^2}  +  \frac{\lambda_k,_{bh}}{\mu_{bh}^2}}{1 - \rho_k}
         \label{T eqn}
 \end{eqnarray}
Based on the analysis of the delay in a single cluster, we derive the network weighted average delay per request as
\begin{eqnarray}
D =  \frac{1}{\lambda} \sum_{k=1}^{K} \lambda_k D_k,
\label{delay equation}
 \end{eqnarray}
where $\lambda =  \sum_{i=1}^{K}\lambda_i$ denotes the overall user request arrival rate in the cell. We observe from (\ref{T eqn}) that the cluster per request delay $D_k$, and correspondingly the network average delay $D$, depend on the arrival rates of the three transmission modes, which are in turn functions of the content caching scheme. Because of the limited caching capacity on mobile devices, we would like to optimize the cache placement in each cluster to minimize the network weighted average delay per request. The delay optimization problem is then formulated  as
\begin{align}
\label{optimize eqn}		
&\underset{c_k,_f}{\text{minimize}} \quad D \\
\label{optimize eqn1}
&\textrm{subject to}\quad  \sum_{f=1}^{m} c_k,_f \leq N,\\
&c_k,_f \in \{ 0, 1\},
\label{optimize eqn2}
\end{align}		
where (\ref{optimize eqn1}) and (\ref{optimize eqn2}) are the constraints that the maximum cache size is $N$ files per cluster, and the file is either cached entirely or is not cached, i.e., no partial caching is allowed. The objective function in (\ref{optimize eqn}) is not a convex function of the cache placement elements $c_{k,f} \in \{ 0, 1\}$. Moreover, this equation can be reduced to a well- known $0-1$ knapsack problem which is already proven to be NP-hard in \cite{NP-hard}. 
\begin{remark}[$N\geq m$]
\textcolor{black}{{\rm In this case, the caching problem is trivial, i.e., there are no caching constraints. For any cluster $k$, $c_k,_f=1 \quad \forall f\in \mathcal{F}$ and $\sum_{f=1}^{m} c_k,_f=m$. The optimal solution is obtained when all the files are cached in each cluster.
All the requests are served internally from the local cluster via D2D communication.} } 
\end{remark}

In the next section, we analyze the network average delay under several caching policies. We further reformulate the optimization problem in (\ref{optimize eqn}) as a well-known structure that has a locally optimal solution within a factor $(1 - e^{-1})$ of the global optimum.

\section{Proposed Caching Schemes} 	
\label{Caching Schemes}
 \subsection{Caching Popular Files (CPF)}
 
 \textcolor{black}{In each cluster, the most popular files for the users in the cluster are cached without repetition. Since popular files are different among clusters (but overlapped), applying CPF might end up replicating the same file in many clusters \cite{caire2015femtocaching}.
 We assume that the request arrival rate $\lambda_k$ is equal for all clusters.}

\subsubsection{Arrival Rate for D2D Communication}
The arrival rate of the D2D communication mode is given by
 \begin{equation}
\lambda_k,_{lc} = \lambda_k \sum_{f=\frac{k-1}{k}m_0 + 1}^{\frac{k-1}{k}m_0 + N}  P_k,_f, 	
\label{lamda1}
\end{equation}
where $\sum_{f=\frac{k-1}{k}m_0 + 1}^{\frac{k-1}{k}m_0 + N}  P_k,_f$ is the probability that the requested file is cached in the local cluster $k$, and $ f=\frac{k-1}{k}m_0 + 1$ is the most popular file index for cluster $k$. As an example, for the first cluster, $\lambda_1,_{lc} = \lambda_1 \sum_{f=1}^{N}  P_1,_f $. 

\subsubsection{Arrival Rate for Inter-cluster Communication}
\label{NaNb}	
The arrival rate of the inter-cluster communication mode is given by
  \begin{equation}
  \lambda_k,_{rc} = \lambda_k \sum_{j \in \mathcal{K}\setminus \{k\}}\sum_{f=c}^{\frac{j-1}{j}m_0 + N}  P_k,_f,    
  \label{lamda2}
\end{equation}
where $c $ is defined as max$\big(\frac{j-2}{j-1}m_0 + N + 1, \frac{j-1}{j}m_0 + 1\big)$. To explain, the inner summation $\sum_{f=c}^{\frac{j-1}{j}m_0 + N}  P_k,_f$ represents the probability that the requested file $f$ is cached in a remote cluster $j\neq k$, where the cached files in the 
$j-$th cluster are indexed from $f=\frac{j-1}{j}m_0 + 1$ to $f=\frac{j-1}{j}m_0 + N$. $c$ is defined such that a cached file in the remote clusters is counted only once when calculating $ \lambda_k,_{rc}$. The outer summation is the sum over all clusters except the local cluster $k$.


To compute the service rate of the remote cluster mode, $\mu_{rc}$, we first need to obtain the number of cooperating clusters $N_a$ since they share the cellular rate. As introduced in Section \ref{prob}, $N_a$ is a random variable representing the number of clusters served by the cellular communication whose mean is given by	
\begin{equation}
 \overline{N_a} =K \frac{\lambda_k,_{rc}}{ \lambda_k} =  K\sum_{j \in \mathcal{K}\setminus \{k\}}\sum_{f=c}^{\frac{j-1}{j}m_0 + N}  P_k,_f
 \end{equation}

 \subsubsection{Arrival Rate for Backhaul Communication}
The arrival rate of the backhaul communication mode is now calculated as
  \begin{align}
  \label{lamda3}
  \lambda_k,_{bh}& = \lambda_k\big(1- (\lambda_k,_{lc} + \lambda_k,_{rc})\big)\nonumber \\
  &= \lambda_k\Big(1- \big( \sum_{f=\frac{k-1}{k}m_0 + 1}^{\frac{k-1}{k}m_0 + N}  P_k,_f  +  \sum_{j \in \mathcal{K}\setminus \{k\}}\sum_{f=c}^{\frac{j-1}{j}m_0 + N}  P_k,_f \big)\Big)
\end{align}
$N_b$ is then obtained to calculate the backhaul service rate $\mu_{bh}$. As alluded to in the definition of $N_a$, $N_b$ is a random variable representing the number of clusters served via the backhaul link whose mean is given by
\begin{align}
   \overline{N_b} &= K \frac{\lambda_k,_{bh}}{ \lambda_k} = K\Big(1- \big( \sum_{f=\frac{k-1}{k}m_0 + 1}^{\frac{k-1}{k}m_0 + N}  P_k,_f  + \nonumber \\ 
   &\sum_{j \in \mathcal{K}\setminus \{k\}}\sum_{f=c}^{\frac{j-1}{j}m_0 + N}  P_k,_f \big)\Big)  
\end{align}
Obviously, we have $ \lambda_k = \lambda_k,_{lc}+\lambda_k,_{rc}+\lambda_k,_{bh}$. From (\ref{lamda1}), (\ref{lamda2}), and (\ref{lamda3}), the network average delay can be calculated directly from (\ref{delay equation}).  
 \textcolor{black}{The CPF scheme is computationally straightforward if the most popular content is known. 
Additionally, the CPF scheme is easy to implement in an independent manner since it is executed in a per cluster level regardless of the caching status of other clusters, which is different from the greedy algorithm proposed in the next subsection.}
However, it achieves high performance only if the popularity exponent $\beta$ is large enough, i.e., when the content popularity distribution is skewed, since a small portion of content is highly demanded which can be cached entirely in each cluster.

\subsection{Greedy Caching Algorithm (GCA)}	
 \label{mat}
In this subsection, we introduce a computationally efficient caching algorithm. We prove that the minimization problem in (\ref{optimize eqn}) can be reformulated as a \textit{minimization of a supermodular function} subject to \textit{uniform partition matroid constraints}. This structure has a greedy solution which has been proven to be locally optimal within a factor $(1 - e^{-1})$ of the optimum \cite{multi-cell,solnmono2,kowloon}.

We start with the definition of supermodular and matroid functions, then we introduce and prove some relevant lemmas.
\subsubsection{Supermodular Functions}
Let $\mathcal{S}$ be a finite ground set. The power set of the set $\mathcal{S}$ is the set of all subsets of $\mathcal{S}$, including the empty set and $\mathcal{S}$ itself. A set function $g$, defined on the powerset of $\mathcal{S}$ as $g$: $2^\mathcal{S}$$\to \mathbb{R}$, is supermodular if for any $A \subseteq B \subseteq \mathcal{S}$ and $x \in \mathcal{S}\setminus B$ we have \cite{solnmono2}
\begin{equation}
g(A\cup \{x\}) - g(A) \leq g(B\cup \{x\}) - g(B)
\end{equation}
To illustrate, let $g_A(x) = g(A\cup x) - g(A)$ denote the marginal value of an element $x \in \mathcal{S}$ with respect to a subset $A \subseteq \mathcal{S}$. Then, $\mathcal{S}$ is supermodular if for all $A \subseteq B \subseteq \mathcal{S}$ and for all $x \in \mathcal{S}\setminus B$, we have $g_A(x) \leq g_B(x)$, i.e., the marginal value of the included set is lower than the marginal value of the including set \cite{ solnmono2}.

 \subsubsection {Matroid Functions}
  Matroids are combinatorial structures that generalize the concept of linear independence in matrices \cite{solnmono2}. A matroid $\mathcal{M}$ is defined on a finite ground set $\mathcal{S}$ and a collection of subsets of $\mathcal{S}$ said to be independent. The family of these independent sets is denoted by $\mathcal{I}$ or $\mathcal{I}(\mathcal{M})$. It is common to refer to a matroid $\mathcal{M}$ by listing its ground set and its family of independent sets, i.e., $\mathcal{M} = (\mathcal{S}, \mathcal{I})$. For $\mathcal{M}$ to be a matroid, $\mathcal{I}$ must satisfy these three conditions:
\begin {itemize}
\item $\mathcal{I}$ is a nonempty set.
\item $\mathcal{I}$ is downward closed; i.e., if $B \in \mathcal{I}$ and $A \subseteq B$, then $A \in \mathcal{I}$.
\item If $A$ and  $B$ are two independent sets of  $\mathcal{I}$ and $B$ has more elements than $A$, then $\exists{e} \in B\setminus A$ such that $A \cup\{e\}\in \mathcal{I}$.

\end{itemize}
One special case is a partition matroid in which the ground set $\mathcal{S}$ is partitioned into disjoint sets $\{S_1, S_2,\dots , S_l\}$, where
 \begin{equation}
 \label{mat defn eqn}
\mathcal{I} = \{A\subseteq \mathcal{S}: |A\cap S_i|\leq k_i  \textrm{ for  all}\ i = 1, 2,\dots, l\},
\end{equation}
for some given integers $k_1, k_2, \dots, k_l$. One special case of the partition matroid is the uniform partition matroid in which $k_1 = k_2 = \dots = k_l$.

\begin{lemma}
The constraints in (\ref{optimize eqn1}) and (\ref{optimize eqn2}) can be rewritten as a uniform partition matroid on a ground set that characterizes the caching elements on all clusters.
\end{lemma}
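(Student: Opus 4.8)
The plan is to exhibit a ground set whose subsets correspond exactly to cache placements, and then verify that the feasible placements defined by (\ref{optimize eqn1})--(\ref{optimize eqn2}) form precisely the independent sets of a uniform partition matroid.

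First, I would take the ground set to be the collection of all (cluster, file) pairs, $\mathcal{S} = \mathcal{K} \times \mathcal{F}$, so that $|\mathcal{S}| = Km$ and each element $(k,f)$ encodes the atomic decision ``cache file $f$ in cluster $k$.'' Every binary matrix $\mathbf{C}$ then corresponds bijectively to the subset $A_{\mathbf{C}} = \{(k,f) \in \mathcal{S} : c_{k,f} = 1\}$, and conversely each $A \subseteq \mathcal{S}$ recovers a valid $0$--$1$ matrix via $c_{k,f} = 1 \iff (k,f) \in A$. This correspondence absorbs the integrality constraint (\ref{optimize eqn2}) automatically, since set membership is inherently binary and no notion of partial inclusion exists.

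Second, I would partition $\mathcal{S}$ by cluster, defining $S_k = \{(k,f) : f \in \mathcal{F}\}$ for $k = 1, \dots, K$. These blocks are pairwise disjoint and satisfy $\bigcup_{k=1}^{K} S_k = \mathcal{S}$, so $\{S_1, \dots, S_K\}$ is a genuine partition of the ground set. Under the above correspondence, the per-cluster capacity constraint (\ref{optimize eqn1}) reads $|A \cap S_k| = \sum_{f=1}^{m} c_{k,f} \leq N$, so the feasible set is exactly $\mathcal{I} = \{A \subseteq \mathcal{S} : |A \cap S_k| \leq N \text{ for all } k = 1, \dots, K\}$. Comparing with (\ref{mat defn eqn}), this is the partition-matroid independence family with $l = K$ blocks and bounds $k_1 = \dots = k_K = N$; because every bound equals the common value $N$, it is a \emph{uniform} partition matroid.

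Finally, I would confirm that $\mathcal{I}$ is indeed a matroid by checking the three axioms. Nonemptiness and downward closure are immediate: $\emptyset \in \mathcal{I}$, and shrinking $A$ can only decrease each $|A \cap S_k|$. The one step that requires a short argument — and the place where the matroid structure is really doing the work — is the augmentation axiom. Given $A, B \in \mathcal{I}$ with $|B| > |A|$, a pigeonhole count over the disjoint blocks forces $|B \cap S_k| > |A \cap S_k|$ for at least one index $k$; choosing any $e \in (B \cap S_k) \setminus A$ then yields $|(A \cup \{e\}) \cap S_k| = |A \cap S_k| + 1 \leq |B \cap S_k| \leq N$, while all other blocks remain unchanged, so $A \cup \{e\} \in \mathcal{I}$. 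Since the partition and the uniform bound are fixed by construction, I expect no genuine obstacle beyond this pigeonhole step; the rest is bookkeeping of the bijection between matrices and subsets.
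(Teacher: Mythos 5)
Your proposal is correct and follows essentially the same route as the paper's Appendix A: identify the ground set with (cluster, file) placement elements, partition it by cluster, translate the cache-capacity constraint into $|A\cap S_k|\leq N$, and match this against the partition-matroid definition with $l=K$ and all bounds equal to $N$. The only difference is that you additionally verify the three matroid axioms (including the pigeonhole argument for augmentation), which the paper omits by simply appealing to the stated definition of a partition matroid; this is a harmless and slightly more self-contained addition.
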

\begin{proof}
Please see Appendix A for the proof.
\end{proof}

\begin{lemma}
The objective function in equation (\ref{optimize eqn}) is a monotone non-increasing supermodular function.
\end{lemma}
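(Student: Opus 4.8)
The plan is to recast the delay as a set function on the ground set $\mathcal{S}=\mathcal{K}\times\mathcal{F}$ introduced in Lemma~1, identifying a placement $\mathbf{C}$ with the set $A\subseteq\mathcal{S}$ of active elements ($(k,f)\in A\Leftrightarrow c_{k,f}=1$) and writing $D=D(A)$. Since $D=\tfrac{1}{\lambda}\sum_k\lambda_k D_k$ in (\ref{delay equation}) is a nonnegative combination of the per-cluster delays $D_k(A)$, and both monotonicity and supermodularity are preserved under nonnegative sums, it suffices to show that each $D_k(A)$ is monotone non-increasing and supermodular.

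First I would strip $D_k$ in (\ref{T eqn}) down to two simple aggregates. Using the service times $\tau_{lc}<\tau_{rc}<\tau_{bh}$ (reciprocals of the ordered rates $R_D>R_{WL}>R_{BH}$) and the mode probabilities $p_{lc},p_{rc},p_{bh}$ (with $p_{lc}+p_{rc}+p_{bh}=1$), set $m_1(A)=\sum_f P_{k,f}\tau_{\mathrm{state}(k,f)}$ and $m_2(A)=\sum_f P_{k,f}\tau_{\mathrm{state}(k,f)}^2$, where $\mathrm{state}(k,f)\in\{lc,rc,bh\}$ is fixed by $A$. Then (\ref{rho_k}) reads $\rho_k=\lambda_k m_1$ and (\ref{T eqn}) becomes the bivariate map \[ D_k=\Phi(m_1,m_2):=m_1+\frac{\lambda_k m_2}{1-\lambda_k m_1}. \] Eliminating $p_{rc}=1-p_{lc}-p_{bh}$ gives $m_i=\text{const}+\alpha_i(1-p_{lc})+\gamma_i\,p_{bh}$ with strictly positive coefficients $\alpha_i,\gamma_i$ (the $\tau$'s and the $\tau^2$'s are strictly ordered). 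Here $p_{lc}(A)=\sum_f P_{k,f}c_{k,f}$ is modular and non-decreasing, so $1-p_{lc}$ is modular and non-increasing, while $p_{bh}(A)=\sum_f P_{k,f}\prod_{k'}(1-c_{k',f})$ is a nonnegative combination of the column-emptiness indicators $\mathds{1}[(k',f)\notin A\ \forall k']$, each of which I would verify to be non-increasing and supermodular by the three-case cross-difference test (distinct files, shared file, coincident file). Hence both $m_1$ and $m_2$ are non-increasing supermodular set functions.

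The crux is to lift these properties through the nonlinear queueing map $\Phi$. I would prove a composition lemma: if $m_1,m_2$ are non-increasing supermodular and $\Phi$ is non-decreasing in each argument with an entrywise nonnegative Hessian on the stability region $\lambda_k m_1<1$, then $\Phi(m_1,m_2)$ is non-increasing supermodular. Monotonicity is immediate. For supermodularity I would check $\Phi(\mathbf{m}(A\cup x))+\Phi(\mathbf{m}(A\cup y))\le\Phi(\mathbf{m}(A))+\Phi(\mathbf{m}(A\cup\{x,y\}))$. Writing the increments $\mathbf{s}=\mathbf{m}(A\cup x)-\mathbf{m}(A\cup\{x,y\})\ge 0$, $\mathbf{t}=\mathbf{m}(A\cup y)-\mathbf{m}(A\cup\{x,y\})\ge 0$, and $\mathbf{r}=\mathbf{m}(A)-\mathbf{m}(A\cup\{x,y\})$, coordinatewise supermodularity of $m_1,m_2$ gives $\mathbf{r}\ge\mathbf{s}+\mathbf{t}$, and monotonicity of $\Phi$ reduces the claim to the superadditivity $\Phi(\mathbf{p}+\mathbf{s})+\Phi(\mathbf{p}+\mathbf{t})\le\Phi(\mathbf{p})+\Phi(\mathbf{p}+\mathbf{s}+\mathbf{t})$ on the positive orthant. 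The latter follows by writing the difference as $\int_0^1\!\!\int_0^1 \mathbf{s}^\top\nabla^2\Phi\,\mathbf{t}\,d\phi\,d\theta$ and using that every entry of $\nabla^2\Phi$ is nonnegative. I would close by computing the partials of $\Phi(m_1,m_2)=m_1+\lambda_k m_2/(1-\lambda_k m_1)$: one finds $\partial\Phi/\partial m_1>0$, $\partial\Phi/\partial m_2>0$, and $\partial^2\Phi/\partial m_1^2\ge 0$, $\partial^2\Phi/\partial m_1\partial m_2\ge 0$, $\partial^2\Phi/\partial m_2^2=0$ on $\lambda_k m_1=\rho_k<1$, where the stability condition $\rho_k<1$ and the ordering $\tau_{lc}<\tau_{rc}<\tau_{bh}$ (which force $m_2\ge 0$ and the positive coefficients above) are exactly what is needed.

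I expect the composition step to be the main obstacle: the delay is a genuinely rational functional of the placement through the blow-up factor $1/(1-\rho_k)$, so neither supermodularity nor even monotonicity can be read off term by term, and products or quotients of supermodular functions are not supermodular in general. Two features make the argument go through: (i) after eliminating $p_{rc}$, the only placement-dependent quantities entering $\Phi$ are the two manifestly supermodular aggregates $m_1,m_2$; and (ii) the entrywise-nonnegative-Hessian property of $\Phi$, which upgrades coordinatewise supermodularity of $(m_1,m_2)$ to supermodularity of the composition. A minor point to dispatch is that every placement met in the cross-difference argument stays within the stability region $\rho_k<1$; this holds because $\rho_k$ is non-increasing in $A$ and attains its maximum at the empty cache.
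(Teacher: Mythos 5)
Your proposal is correct, but it takes a genuinely different---and substantially more complete---route than the paper's Appendix~B. The paper argues directly by cases: it fixes two nested placements $A\subset A'$, adds one caching element, and compares ``marginal values'' defined as changes in the expected file download time $P_{k,f}\bigl(\overline{S}/R - \overline{S}/R'\bigr)$ across the three modes, using the rate ordering $R_D>\overline{R_{WL}}>\overline{R_{BH}}$. That case analysis in effect establishes supermodularity only of the linear first-moment term $\rho_k/\lambda_k=\sum_f P_{k,f}\,\tau_{\mathrm{mode}(f)}$ (your $m_1$); it never engages with the second-moment numerator $\sum_j \lambda_{k,j}/\mu_j^2$ or the blow-up factor $1/(1-\rho_k)$ in (\ref{T eqn}), even though, as you rightly stress, supermodularity is not preserved under quotients and cannot be read off term by term. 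Your decomposition $D_k=\Phi(m_1,m_2)$ with $\Phi(m_1,m_2)=m_1+\lambda_k m_2/(1-\lambda_k m_1)$, the verification that $m_1,m_2$ are non-increasing supermodular (via modularity of $p_{lc}$ and supermodularity of the column-emptiness indicators in $p_{bh}$, with positive coefficients coming from $\tau_{lc}<\tau_{rc}<\tau_{bh}$), and the composition lemma based on the entrywise-nonnegative Hessian of $\Phi$ on the stability region together close exactly the gap the paper leaves open. What the paper's argument buys is brevity and an operational reading of the rate-ordering assumption; what yours buys is an actual proof of the lemma as stated, plus the observation---absent from the paper---that stability $\rho_k<1$ must hold at the worst-case (empty) placement for the set function to be well defined along the cross-difference argument. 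Both arguments share the same simplification of freezing $N_a$ and $N_b$ at their averages so that $\tau_{rc}$ and $\tau_{bh}$ do not themselves depend on the placement; that assumption is needed in either route and is worth stating explicitly.
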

\begin{proof}
Please see Appendix B for the proof.
\end{proof}
The greedy solution for this problem structure has been proven to be locally optimal within a factor $(1 - e^{-1})$ of the optimum \cite{multi-cell,solnmono2,kowloon}.
\textcolor{black}{The greedy caching algorithm for the proposed D2D caching system with inter-cluster cooperation is illustrated in Algorithm 1, where $S_k^f$ is an element denoting the placement of file $f$ into the VCC of cluster $k$.
We first define the attributes of the system in the first line of the algorithm's pseudocode. We then initialize the cache memory of all clusters to zero. We set the number of iterations to be $NK$, which means that at each iteration, we cache one file in one cluster, resulting in caching $N$ different files in $K$ clusters after $NK$ iterations. In each iteration, all combinations of caching a file $f \in \mathcal{F}$ in a cluster $k \in \mathcal{K}$ are tried, and the network service delay is calculated. A file $f^*$ is chosen to be cached in the  $k^*$-th cluster, which achieves the highest reduction in the network service delay.}

The greedy algorithm is run at the BS level, and the BS then instructs the clusters'€™ devices to cache the files according to the output of this algorithm. The deterministic caching approach (both CPF and GCA) can only be realized if the devices stay at the same locations for many hours. Otherwise, performance obtained with the deterministic caching strategy serves as a useful upper bound for more realistic schemes \cite{caire2015femtocaching}. 
\textcolor{black}{As examples of the greedy algorithm, the authors in \cite{kowloon} showed that the problem of optimal joint caching and routing can be formulated as maximization of a monotone submodular function subject to matroid constraints, and hence can be solved by the greedy algorithm. Also, the authors in \cite{mono} showed that the delay minimization problem can be formulated as a maximization of a submodular function under matroid constraints, which can be solved by the greedy algorithm.}		

\begin{algorithm}
    \SetKwInOut{Input}{Input}
    \SetKwInOut{Output}{Output}

    \Input{$K$, $m$, $N$, $\beta$, $\overline{S}$, $R_D$, $\overline{R_{WL}}$, $\overline{R_{BH}}$;}
    \textbf{Initialization}: {$C \gets (0)_{K\times F}$}\;
    \tcc{Check if all clusters (users' memories in each cluster) are fully cached.}
       \While{$\sum_{k=1}^{K} \sum_{f=1}^{m}c_{k,f} < NK$}		
      {
        ($k^*$, $f^*$) $\gets \mathrm{argmax}_{(k, f)} D(C) - D(C\cup S_k^f)$\;
        \tcc{File achieving highest marginal value is cached.}
        $c_{k^*,f^*}=1$ \;

      }
      \Output{Cache placement $C$;}
 \caption{Greedy Caching Algorithm}
\end{algorithm}

\section{Throughput Analysis}
We have analyzed the per request average delay from the network perspective under different caching schemes. In this section, we conduct the per request throughput and throughput scaling analysis. We first characterize the per request throughput from the queuing theory perspective based on the analytical results of previous sections, then study the scaling of the average sum throughput when the number of files asymptotically goes to infinity. 

\subsection{{per request Throughput Analysis}}
In this subsection, we first formulate a condition on the traffic demand for the network to be stable, then we study the throughput per request from the cluster perspective. 
As introduced in Section \ref{sec}, the content size $S_f$ is assumed to have an exponential distribution with mean $\overline{S}$ [bits]. For a cluster $k \in \mathcal{K}$ whose users' traffic is modeled as an MPSQ with three serving processors (transmission modes), the number of users' requests in the queue that matches the $j$-th transmission mode is denoted by $x_j$, where $j \in \mathbb{D} := \{M_{lc}, M_{rc}, M_{bh}\}$. Denote $\textbf{x} = (x_j)_{j \in \mathbb{D}}$ as the vector counting the numbers of users' requests in the queue for each transmission mode $j \in \mathbb{D}$.

The process $\{X(t); t \geq 0\}$ describing the number of users' requests served by the three serving processors (transmission modes) is then a continuous-time Markov process \cite{queue_32}. This process has a discrete state space $\mathbb{N^D}$ and admits the following generator \cite{queue_32}: 		
\[
\begin{cases}
               q(\textbf{x}, \textbf{x} + \epsilon_j) = \lambda_{k,j},     \quad  \quad  \quad &\textbf{x} \in \mathbb{N^D}, j \in \mathbb{D}, \\ 
              q(\textbf{x}, \textbf{x} - \epsilon_j) = \frac{R_j}{\overline{S}} \frac{x_j}{x_\mathbb{D}}, \quad  \quad  &\textbf{x} \in \mathbb{N^D}, j \in \mathbb{D}, x_j >0,
            \end{cases}
\]
where $\epsilon_j$ designates the vector of $\mathbb{N^D}$ having coordinate 1 at position j and 0 elsewhere, and $x_\mathbb{D}:= \sum_{j \in \mathbb{D}}x_j$. The first term of the above generator, $q(\textbf{x}, \textbf{x} + \epsilon_j)$, accounts for the arrival of a request that matches the $j-$th transmission mode while the second term, $q(\textbf{x}, \textbf{x} - \epsilon_j)$, accounts for serving a request by the $j-$th transmission mode.

Let $\textbf{X} = (X_{lc}, X_{rc}, X_{bh})$ be the vector counting the number of users' requests of each transmission mode at the steady state, and let $X_\mathbb{D} := \sum_{j \in \mathbb{D}}X_j$ be the total number of requests in the queue at the steady state. 
The average traffic demand $\zeta_j$ [bps] of each transmission mode $j \in \mathbb{D}$ in the $k-$th cluster is defined as \cite{heten_mpsq}  
 \begin{equation}
\zeta_j = \lambda_{k,j} \overline{S},
\end{equation}
and the total traffic demand per cluster is then given by
 \begin{equation}
\zeta = \sum_{j \in \mathbb{D}}\zeta_j
\end{equation}
We now obtain the cluster critical traffic demand, beyond which the MPSQ is no longer stable. The constraint (\ref{rho_k}) that limits the traffic intensity $\rho_k$ from the above to one can be rewritten as		
\begin{align}
   \rho_k &= \frac{\lambda_{k,lc}}{R_D/\overline{S}} + \frac{\lambda_{k,rc}}{\overline{R_{WL}}/\overline{S}} + \frac{\lambda_{k,bh}}{\overline{R_{BH}}/\overline{S}} \leq 1,\nonumber \\
  &\frac{\zeta_{lc}}{R_{D}} + \frac{\zeta_{rc}}{\overline{R_{WL}}} + \frac{\zeta_{bh}}{\overline{R_{BH}}} \leq 1,
  \label{critical}
\end{align}
by multiplying both sides by $\zeta$ and rearranging the terms, we get
\begin{align}
   &\zeta \leq \frac{\zeta}{(R_D^{-1}\zeta_{lc} + \overline{R_{WL}}^{-1}\zeta_{rc} + \overline{R_{BH}}^{-1}\zeta_{bh})}, \nonumber \\
   &\zeta \leq \zeta_c, 
  \label{critical_zeta}
\end{align}
where $\zeta_c$ [bps] is the critical traffic demand per cluster, beyond which the MPSQ loses its stability.
\begin{lemma}
The steady state distribution of the total number of users' requests in the MPSQ modeling the users' traffic follows a geometric distribution with parameter $p =1 - \zeta/\zeta_c$.
\end{lemma}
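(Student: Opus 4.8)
The plan is to recognize the Markov process $\{X(t); t\geq 0\}$ with the stated generator as a multiclass \emph{processor-sharing} (symmetric) queue, whose joint stationary law is of product form, and then to marginalize that law onto the total occupancy $X_{\mathbb{D}}=\sum_{j\in\mathbb{D}}X_j$. Writing the per-mode offered load as $a_j := \lambda_{k,j}\big/(R_j/\overline{S}) = \zeta_j/R_j$, I would posit the candidate invariant measure
\[
\pi(\mathbf{x}) = \pi(\mathbf{0})\,\frac{x_{\mathbb{D}}!}{\prod_{j\in\mathbb{D}}x_j!}\prod_{j\in\mathbb{D}} a_j^{x_j},
\]
and seek to confirm it is the stationary distribution of the generator.

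The crux, and the step I expect to carry the real content, is verifying invariance. Rather than attacking full global balance directly, I would check the per-class \emph{partial balance} relation
\[
\lambda_{k,j}\,\pi(\mathbf{x}) = \frac{R_j}{\overline{S}}\,\frac{x_j+1}{x_{\mathbb{D}}+1}\,\pi(\mathbf{x}+\epsilon_j), \qquad j\in\mathbb{D},
\]
which equates the class-$j$ arrival flux out of $\mathbf{x}$ with the class-$j$ departure flux into $\mathbf{x}$. Because the departure rate $(R_j/\overline{S})\,x_j/x_{\mathbb{D}}$ carries exactly the state-dependent sharing factor that the multinomial coefficient in $\pi$ produces, the ratio $\pi(\mathbf{x}+\epsilon_j)/\pi(\mathbf{x}) = a_j(x_{\mathbb{D}}+1)/(x_j+1)$ collapses the right-hand side to $(R_j/\overline{S})\,a_j\,\pi(\mathbf{x}) = \lambda_{k,j}\,\pi(\mathbf{x})$, so the relation holds termwise. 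Summing partial balance over $j\in\mathbb{D}$ then yields global balance; equivalently, one may simply invoke the reversibility/insensitivity of processor-sharing queues. Everything after this point is algebra.

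With the product form in hand, I would compute the marginal of $X_{\mathbb{D}}$ by summing $\pi(\mathbf{x})$ over all states with a fixed total $x_{\mathbb{D}}=n$ and applying the multinomial theorem to eliminate the class indices,
\[
\mathbb{P}(X_{\mathbb{D}}=n) = \pi(\mathbf{0})\sum_{\mathbf{x}:\,x_{\mathbb{D}}=n}\frac{n!}{\prod_{j}x_j!}\prod_{j}a_j^{x_j} = \pi(\mathbf{0})\Big(\sum_{j\in\mathbb{D}}a_j\Big)^{n}.
\]
Setting $\rho := \sum_{j\in\mathbb{D}}a_j$ and imposing normalization $\sum_{n\geq0}\pi(\mathbf{0})\rho^{n}=1$ forces $\pi(\mathbf{0})=1-\rho$ under the stability condition $\rho<1$, so $\mathbb{P}(X_{\mathbb{D}}=n)=(1-\rho)\rho^{n}$, which is geometric with parameter $1-\rho$. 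Finally, to pin down the parameter I would note that $\rho = \zeta_{lc}/R_D + \zeta_{rc}/\overline{R_{WL}} + \zeta_{bh}/\overline{R_{BH}} = \rho_k$, and comparing with the critical-demand expression in (\ref{critical_zeta}), namely $\zeta_c = \zeta/(R_D^{-1}\zeta_{lc}+\overline{R_{WL}}^{-1}\zeta_{rc}+\overline{R_{BH}}^{-1}\zeta_{bh})$, gives $\rho = \zeta/\zeta_c$. Substituting yields $p = 1-\rho = 1-\zeta/\zeta_c$, as claimed.
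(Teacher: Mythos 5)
Your proof is correct, and it is more than the paper offers: the paper does not prove this lemma at all, but simply states that the result ``can be deduced from \cite{queue_32} and the references therein'' and omits the argument. Your self-contained derivation is the standard one for symmetric/processor-sharing queues and all of its steps check out. The candidate product-form measure is right; the partial-balance identity you verify is in fact detailed balance for transitions between $\mathbf{x}$ and $\mathbf{x}+\epsilon_j$ (the only transitions the generator permits), so it immediately gives reversibility and hence stationarity without needing to assemble global balance by hand; the multinomial collapse to $\rho^{n}$ with $\rho=\sum_j a_j$ is exact; and the identification $\rho=\zeta/\zeta_c$ via (\ref{critical_zeta}) correctly pins down $p=1-\zeta/\zeta_c$. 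One small observation worth recording: your form $\mathbb{P}(X_{\mathbb{D}}=n)=p(1-p)^{n}$ has mean $(1-p)/p=\zeta/(\zeta_c-\zeta)$, which agrees with the paper's final expression for $\overline{N_q}$ but not with its intermediate formula $p/(1-p)$; the paper's parametrization of the geometric law is internally inconsistent there, and your derivation resolves the ambiguity in favor of the value the paper actually uses downstream.
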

\begin{proof}
This result can be deduced from \cite{queue_32} and the references therein, and the proof is omitted in this paper to avoid repetition.
\end{proof}
As a direct result from Lemma 3, the mean number of total users' requests in the MPSQ at the steady state is given by
\begin{equation}
\overline{N_{q}} = E[X_\mathbb{D}] = \frac{p}{1-p} = \frac{\zeta}{\zeta_c - \zeta}
\end{equation}

At the steady state, the queue throughput is equal to the traffic demand $\zeta$. Hence, the average throughput per request is defined as the ratio of the given queue throughput and the average number of users' requests, i.e.,
\begin{equation}
\overline{r} = \frac{\zeta}{E[X_\mathbb{D}]} = \zeta_c - \zeta
\end{equation}
\subsection{Throughput Scaling Analysis}
\label{Asymptotic}
We conduct the scaling analysis of the average sum throughput when the number of files grows asymptotically to infinity, i.e., $m \to \infty$. 
We first define the outage probability for our proposed D2D cooperative caching system and then compare it with a clustered D2D caching system without inter-cluster cooperation \cite{thropt_outage}. The obtained formula of the outage probability is further approximated and then exploited in the throughput scaling analysis.

In the following, we shall implicitly ignore the non-integer effects when they are irrelevant for the scaling laws. For example, recalling that the network has node density $n$ and it is divided into $K$ clusters, the number of users per cluster after integer rounding is denoted as $y$. Next, we conduct the analysis for the CPF scheme. Since the backhaul rate is considered much smaller than the rate of cellular and D2D communications, we assume that the throughput from the backhaul communication is negligible as compared to the cellular and D2D throughput.
 \subsubsection{Outage Probability}	
 \label{outage}
 For a reference clustered D2D caching network without inter-cluster cooperation \cite{thropt_outage}, the probability of no outage is defined as the probability that a randomly chosen user $u$ can download a requested file from nearby users in the same cluster \cite{thropt_outage}. 
 Conversely, a user $u$ is said to be in outage when its requested file is not cached within the allowed transmission range (i.e., not cached in a neighbor user in the same cluster). 
 In our cooperative clustered model, a user $u$ is said to be in outage when the requested file is neither stored in the local cluster nor any of the remote clusters. We denote this outage probability as $p_o$, which also represents the percentage of users who are in outage in relation to the total number of users;  the probability of no outage is then denoted as $1 - p_o$. 
 
As stated before, the number of users per cluster, denoted as $y$, equals $(n/K)$. 
In addition, the probability of no outage, $1 - p_o$, can be calculated by determining the probability that a randomly chosen user $u$ in cluster $k$ is served via the local cluster or the remote cluster modes.
The probability of no outage is therefore expressed as the sum of two terms, the first term is corresponding to the probability of serving requests from the local cluster, and the second term is the probability of being served from a remote cluster.
From (\ref{lamda1}) and (\ref{lamda2}), and under the assumption of the CPF scheme, the probability of no outage is given by
\begin{equation}
\label{lamda44}
1 - p_0 = \sum_{f=\frac{k-1}{k}m_0 + 1}^{\frac{k-1}{k}m_0 + My}  P_k,_f + \sum_{j \in \mathcal{K}\setminus \{k\}}\sum_{f=c}^{\frac{j-1}{j}m_0 + My}  P_k,_f,    
\end{equation}
where $M$ is the maximum user cache size in files (our default is $M = 1$), and $c$ is defined in (\ref{lamda2}).
Substituting $P_{k,f}$ from (\ref{popularity eqn}), we obtain the result
\begin{equation}
\label{lamda4}
1 - p_0 = \frac{ \sum_{f=\frac{k-1}{k}m_0 + 1}^{\frac{k-1}{k}m_0 + My}f^{-\beta}} { \sum_{i=1}^{m}i^{-\beta} } + \sum_{j \in \mathcal{K}\setminus \{k\}}\frac{ \sum_{f=c}^{\frac{j-1}{j}m_0 + My}f^{-\beta}  }{  \sum_{i=1}^{m} i^{-\beta} } 
\end{equation}
Due to the symmetry between clusters in terms of the cache content, cluster cache size, and the probability of being served from a remote cluster, we continue with the assumption that the user $u$ is being served from the first cluster (i.e., $k=1$) and the remote clusters (the potential cooperating clusters) are from $k=2$ to $k=K = n/y$. 
\begin{align}
\label{lamda5}
1 - p_0 &= \frac{ \sum_{f=1}^{My}f^{-\beta}} { \sum_{i=1}^{m}i^{-\beta} } + \sum_{j=2}^{\frac{n}{y}}\frac{ \sum_{f=c}^{\frac{j-1}{j}m_0 + My  }f^{-\beta}  }{  \sum_{i=1}^{m} i^{-\beta} }\nonumber \\
&= \frac{ \sum_{f=1}^{My}f^{-\beta}} { \sum_{i=1}^{m}i^{-\beta} } + \frac{1}{ \sum_{i=1}^{m} i^{-\beta}}\sum_{j=2}^{\frac{n}{y}}\sum_{f=c}^{\frac{j-1}{j}m_0 + My}f^{-\beta}
\end{align}

We now aim at deriving an approximated version of (\ref{lamda5}) by replacing the summations with approximated integrals from \cite{zeta}, and then the obtained result is used later in the throughput scaling analysis. We have two approximations from \cite{zeta},
\begin{equation}
\label{approx1}
\sum_{i=1}^{q}i^{-\alpha} \approx \int_{i}^{q+1}x^{-\alpha}dx = \frac{(q+1)^{1-\alpha} - 1}{1 - \alpha},
\end{equation}
and 
\begin{align}
\label{approx2}
\sum_{i=w+1}^{q-1}i^{-\alpha} &\approx \int_{w}^{q}x^{-\alpha}dx - \frac{w^{\alpha} + q^{\alpha}}{2}, \nonumber \\
&= \frac{q^{1 - \alpha} - w^{1 - \alpha}}{1 - \alpha} - \frac{w^{\alpha} + q^{\alpha}}{2}
\end{align}
The above approximations are quite tight for small values of the popularity exponent, e.g., when $\beta < 1$. Substituting (\ref{approx1}) and (\ref{approx2}) into (\ref{lamda5}) yields
\begin{align}
&1 - p_0 \approx \frac{\frac{1}{1 - \beta}(My+1)^{1-\beta} - \frac{1}{1 - \beta}} { \frac{1}{1 - \beta}(m+1)^{1-\beta} - \frac{1}{1 - \beta} } + \nonumber \\
&\frac{1}{\frac{1}{1 - \beta}(m+1)^{1-\beta} - \frac{1}{1 - \beta} }\sum_{j=2}^{\frac{n}{y}}\Big(
\frac{  \big( \frac{j-1}{j}m_0 + My + 1\big)^{1-\beta} - \big(c'\big)^{1- \beta}   }{1 - \beta} \nonumber \\
&- \frac{  \big(c'\big)^{\beta} + \big( \frac{j-1}{j}m_0 + My + 1\big)^{\beta}  }{2}\Big),\nonumber \\
& =  \overline{ p}_{0,nc} + \overline{p}_{0,wc}
\label{approx3}
\end{align}
where $c' =$max$(\frac{j-i}{j}m_0, \frac{j-2}{j-1}m_0 + My)$, and $ \overline{ p}_{0,nc}$, $\overline{p}_{0,wc}$ represent respectively the probability of no outage for a non-cooperative system and the improvement (increase) in the probability of no outage due to the inter-cluster cooperation. In Fig.~\ref{outage_fig}, we plot the outage probability of our proposed system with inter-cluster cooperation compared to a reference system without inter-cluster cooperation. We note that as the number of users per cluster increases, the outage probability correspondingly decreases. That is attributed to the fact that the probability of obtaining the requested files from the local cluster increases with the number of users per cluster.			
\begin{figure}
\begin{center}
\includegraphics[width=3.0in]{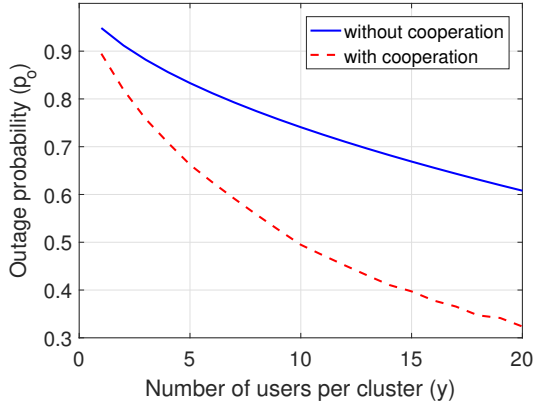}
\caption { Outage probability of a D2D clustered caching system with cooperation compared to a reference system without cooperation \cite{thropt_outage} ($m = 108, n = 120, M = 1, m_0 = 60, \beta=0.5$).}
\label{outage_fig}
\end{center}
\end{figure}

\subsubsection{Throughput Scaling Analysis}
We now express the network average sum throughput, denoted as $T_{sum}^{avg}$ (bps), as a function of the system parameters, namely, number of users, library size, and popularity exponent. Based on the assumed interference model, only one D2D link can be active at any time in each cluster. Whenever there is an active D2D link within a
cluster, we say the cluster is good.\footnote{In this article, and different from \cite{thropt_outage}, we neglect the inter-cluster interference. We assume that any cluster can be active whenever there is a scheduled D2D link, regardless of the activity of all other clusters. This assumption makes the calculated throughput an upper bound for the actual throughput.}
We also assume that a D2D link is scheduled in any cluster whenever the opportunity arises, i.e., if the user to be served in a cluster requests a file not cached locally, the request is then served via the appropriate transmission mode (remote cluster or backhaul modes), meanwhile, another D2D link is scheduled inside the cluster. In addition to the D2D throughput, there is the cellular throughput from the remote cluster mode. So the instantaneous throughput $T_{sum}$ (bps) can be written as			
\begin{align}
T_{sum} &= \textrm{D2D throughput} + \textrm{Cellular throughput},\nonumber 
\end{align}
and the average sum throughput is obtained from 
\begin{align}
T_{sum}^{avg} &= R_{D2D} E[L] + R_{WL} P_{rc},
\end{align}
where $L$ is the number of active D2D links, $E[L]$ is the expected number of active D2D links, which is approximately the expected number of good clusters, and $P_{rc}$ is the probability of occurrence of cooperation between clusters. For notational simplicity, we henceforth substitute $R_{D2D}$ by $C$ (bps) and $R_{WL}$ by $k_1C$ (bps), where $k_1 < 1$. 
\begin{align}
\label{small}
T_{sum}^{avg} &= C \big( E[L] + k_1 P_{rc}\big),\nonumber \\
& \leq C \big( E[L] + k_1\big),
\end{align}
where the above inequality holds because $P_{rc} $ is a  probability and cannot be greater than one. In particular, $P_{rc}$ is tight with its upper bound for a large number of clusters and relatively uniform popularity distribution (i.e., not skewed). 
In the sequel, we calculate the expected number of good clusters $E[L]$.

Up to now, the cell is divided into $K = (n/y)$ virtual clusters, each of them with $y$ uniformly distributed users. As mentioned before, a cluster is good if at least one user requests a file that can be served from the locally cached content via D2D communication. Conversely, a cluster is not good if all $y$ users in the same cluster cannot serve their requests from the locally cached content, which occurs with probability $p_{0,nc}^{y}$ \cite{Scaling_Behavior}, where $p_{0,nc} = 1 - \overline{ p}_{0,nc}$  
is the probability that a randomly chosen user $u$ in any cluster can not obtain a requested file from nearby users in the same cluster. The probability of having a good cluster is then $1 - p_{0,nc}^{y}$. Therefore, we have the following
\begin{align}
\label{approx33}
E[L] = \frac{n}{y}(1 - p_{0,nc}^{y})
\end{align}
Substituting $p_{0,nc}$ from (\ref{approx3}), and (\ref{approx33}) into (\ref{small}) yields
\begin{align} 
\label{small1}
T_{sum}^{avg} &\leq C \Big(\frac{n}{y}(1 - p_{0,nc}^{y}) + k_1\Big), \nonumber \\
&= C \frac{n}{y} \Big(1 - (1 - \frac{\frac{1}{1 - \beta}(My+1)^{1-\beta} - \frac{1}{1 - \beta}} { \frac{1}{1 - \beta}(m+1)^{1-\beta} - \frac{1}{1 - \beta} })^{y}\Big) + k_1C
\end{align}
Similar to \cite{thropt_outage} and \cite{Scaling_Behavior}, we define the quantity 
\begin{align} 
\label{gamma}
\gamma = \frac{1 - \beta}{2 - \beta},
\end{align}
where $\gamma$ changes from $0$ to $\frac{1}{2}$ when $\beta$ changes from $1$ to $0$. These ranges of $\beta$ and $\gamma$ are interesting for the scaling analysis since they are reasonable in practice \cite{thropt_outage}. 
In the following, we conduct the scaling analysis for the regime when $y$ changes sublinearly with $m$ \cite{thropt_outage}, i.e, $y =  \rho m^{\gamma}$ for some constant $\rho$, and $\gamma \leq \frac{1}{2}$. 
We analyze the scaling of the upper bound for $T_{sum}^{avg}$ when $m$ asymptotically grows to infinity. 
Substituting $y = \rho m^{\gamma}$ into (\ref{small1}) yields
\begin{align} 
\label{scaling}
T_{sum}^{avg} &\leq C \frac{n}{\rho m^{\gamma}} \Big(1 - \big(1 - \frac {\frac{1}{1 - \beta}(My+1)^{1-\beta} - \frac{1}{1 - \beta} }  { \frac{1}{1 - \beta}(m+1)^{1-\beta} - \frac{1}{1 - \beta}  }\big)^{\rho m^{\gamma}}\Big) \nonumber \\
 &+ k_1C, \nonumber \\
&= C \frac{n}{\rho m^{\gamma}} \Big(1 - \big(1 - M^{1-\beta} \rho^{1-\beta} F^{(1-\beta)(\gamma-1)} \big)^{\rho m^{\gamma}}\Big) + \nonumber \\
&+ k_1C, \nonumber \\
&\overset{(a)}{=} C \frac{n}{\rho m^{\gamma}} \Big(1 - \big(1 - M^{1-\beta} \rho^{1-\beta} m^{-\gamma} \big)^{\rho m^{\gamma}}\Big) + k_1C, 
\end{align}
where (a) follows by using $(1-\beta )(\gamma -1) = -\gamma$, then we have
\begin{align} 
&T_{sum}^{avg} \leq  k_1C + C \frac{n}{\rho m^{\gamma}} \Big(1 - \nonumber \\
&\big(\big(1 - \rho^{1-\beta} M^{1-\beta} m^{-\gamma} \big)^{\rho^{-(1 - \beta)}M^{-(1 - \beta)}m^{\gamma}}\big)^{\rho^{2 - \beta} M^{-(1 - \beta)}}\Big), \nonumber \\
&\overset{(b)}{=} k_1C + \frac{C}{\rho } \Big(1 - \big(e^{-1} \big)^{\rho^{2 - \beta} M^{-(1 - \beta)}}       \Big)      \frac{n}{m^{\gamma} }, 
\end{align}
where (b) follows from $\lim_{x\to\infty} (1 - x^{-1})^{x} = e^{-1}$, then we have
\begin{align}
T_{sum}^{avg} &\leq \frac{C}{\rho } \Big(1 - e^{-\rho^{2 - \beta} M^{-(1 - \beta)}}       \Big)      \frac{n}{m^{\gamma} }  + k_1C, \nonumber \\
&= \Theta\Big(\frac{n}{m^{\gamma} }\Big) + O(1)
\end{align}
This result shows that: 
\begin{itemize}
	\item As the library size $m$ increases, the upper bound for $T_{sum}^{avg}$ decreases, since the probability of having active D2D links (good clusters) decreases. 
	\item As $\gamma$ increases, corresponding to the decrease of the popularity exponent $\beta$, the upper bound for $T_{sum}^{avg}$ vanishes more rapidly with the library size $m$.
	\item The upper bound for $T_{sum}^{avg}$ scales linearly with the number of users $n$.\footnote{We use the standard Landau notation: $g(n) = O(g(n))$ denotes $g(n) \leq c_1g(n)$ and $g(n) = \Theta(g(n))$ denotes $k_1g(n) \leq g(n) \leq k_2g(n)$, where $c_1$, $k_1$, and $k_2$ are real constants $>0$.}
\end{itemize}
 

The average sum throughput is plotted against the number of users per cluster $y$ in Fig.~\ref{thropt_vs_G}, for different values of $\beta$. We observe that there is an optimal value of $y$ at which the throughput is maximized. 
First, the throughput increases with the cluster size $y$. Then, as the cluster size increases, the outage probability decreases owing to the higher cache size per cluster. However, for larger cluster size, the throughput starts to decrease owing to the decrease in the number of clusters associated with the larger cluster size. 

\begin{figure}
\begin{center}
\includegraphics[width=3.0in]{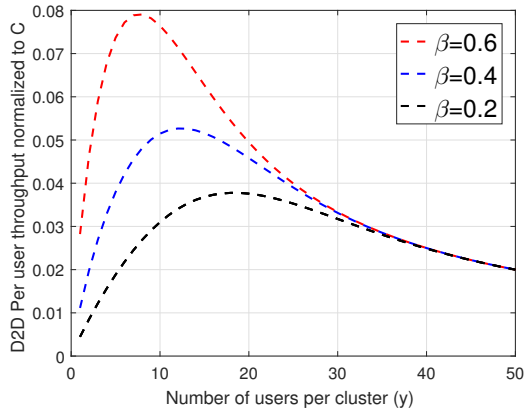}		
\caption {D2D per-user  throughput of the cooperative system is plotted against the number of users per cluster $y$ at different values of the popularity exponent $\beta$ (parameters as in \cite{thropt_outage}, $n = 10,000$ users$, m = 1000$ files $, m_0=200$ files).}
 \label{thropt_vs_G}
\end{center}
\end{figure}

\section{Numerical Results}
\label{sim}
In this section, we evaluate the performance of our proposed inter-cluster cooperative architecture using simulation and analytical results. Results are obtained with the following parameters: $\lambda_k = 0.5$ requests/sec, $m_0=60$ files, $m = 108$ files, $\overline{S} = 4$ Mbits, $K = 5$ clusters, $n = 25$ users, $M=4$ files, and $N= 20$ files. $R_{WL} =50$ Mbps and $R_{BH} = 5$ Mbps  as in \cite{multi-cell}. For a typical D2D communication system with transmission power of 20 dBm, transmission range of 10 m, and free space path loss model as in \cite{basic_principle}, we have $R_{D} = 120$ Mbps.

\begin{figure}
	\begin{center}
		\includegraphics[width=3.0in]{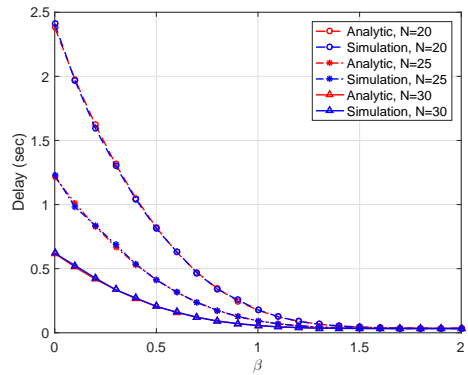}		
		\caption {Network average delay versus popularity exponent $\beta$ under the CPF scheme.}
		\label{delay}
	\end{center}
\end{figure}

\begin{figure}
	\begin{center}
		\includegraphics[width=3.0in]{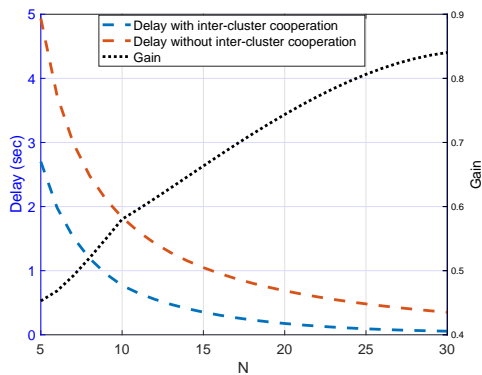}		
		\caption {Network average delay (left hand side y-axis) and gain (right hand side y-axis) vs cluster cache size $N$.}
		\label{cache_size}
	\end{center}
\end{figure}

In Fig.~\ref{delay}, we verify the accuracy of the analytical results of the network average delay under the CPF with inter-cluster cooperation. The theoretical and simulated results for the network average delay under the CPF scheme are plotted together, and they are consistent. We see that the network average delay is significantly improved by increasing the cluster cache size $N$. Moreover, as $\beta$ increases, the average delay decreases. This is attributed to the fact that a small portion of content forms most of the requests that can be cached locally in each cluster and delivered via high data rate D2D communication.

In the following, we evaluate and compare the performance of various caching schemes.  In Fig.~\ref{cache_size}, our proposed inter-cluster cooperative caching system is compared with a D2D caching system without cooperation under the CPF scheme. For a D2D caching system without cooperation, requests for files that are not cached in the local cluster are downloaded directly from the core network. For the sake of concise comparison, we define the delay reduction gain as
\begin{equation}
\textrm{Gain} = 1 - \frac{\textrm{Delay with inter-cluster cooperation}}{\textrm{Delay without inter-cluster cooperation}}	
\end{equation} 
Fig.~\ref{cache_size} shows that, for a small cluster cache size, the delay reduction (gain) of our proposed inter-cluster cooperative caching is higher than 45\% with respect to a D2D caching system without inter-cluster cooperation and greater than 80\% if the cluster cache size is large. 

\begin{figure}
	\begin{center}
		\includegraphics[width=3.0in]{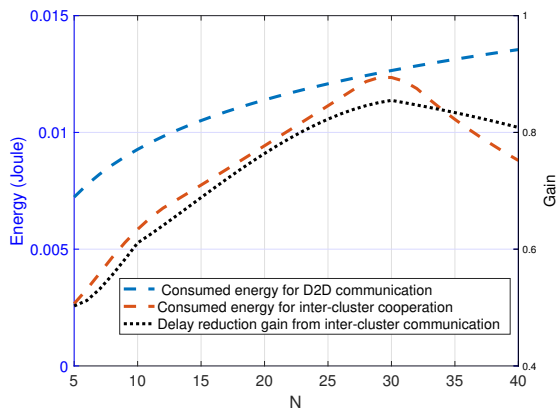}
		\caption {Energy consumption per cluster during the local and remote cluster transmissions (left hand side y-axis) and the gain attained from inter-cluster cooperation (right hand side y-axis) vs cluster cache size $N$.}
		\label{cache_size11}
	\end{center}
\end{figure}

To show the energy-delay reduction gain tradeoff among the devices, in Fig.~\ref{cache_size11}, we plot the per-cluster energy consumption during the local and remote cluster modes and the gain attained from inter-cluster cooperation against the cluster cache size $N$. $P_{lc} = 20$ dBm, and $P_{rc} = 23$ dBm denote respectively the transmission power in the local cluster and remote cluster modes. In each transmission mode, the energy per request is the transmission power times the transmission duration. The transmission duration is given by the ratio of file size over the transmission rate. 
We see that the consumed energy during the local cluster transmission, i.e., D2D communication, monotonically increases with the cluster cache size $N$. 
With the increasing of $N$, more requests are served via the local cluster mode $M_{lc}$. 
For the consumed energy during the remote cluster transmission, we see that it initially increases with $N$, then it decreases, and the same behavior is observed for the delay-reduction gain. This can be interpreted as follows. When $N$ increases, the number of requests served from the remote clusters increases since the remote clusters' VCCs increase. When $N$ becomes much larger, the local cluster cache becomes sufficiently large to serve most of the requests, as opposed to being served by the remote cluster mode.

\begin{figure*}
\centering
  \subfigure[Network average delay vs request arrival rate for three caching schemes, CPF, GCA, and RC. ]{\includegraphics[width=3.0in]{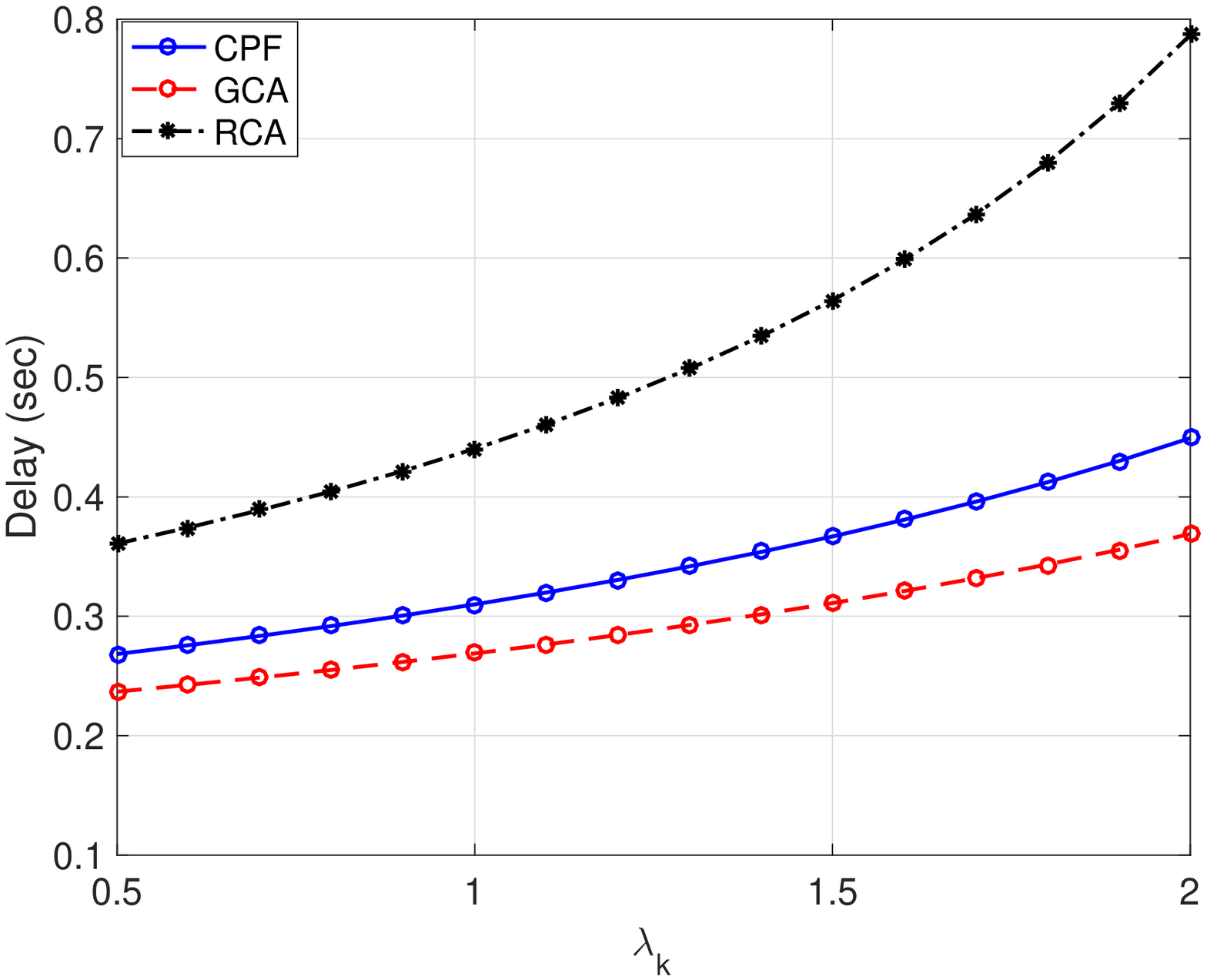}
\label{locally optimal delay vs lambda}}
\subfigure[Network average delay vs popularity exponent for three caching schemes, CPF, GCA, and RC.]{\includegraphics[width=3.0in]{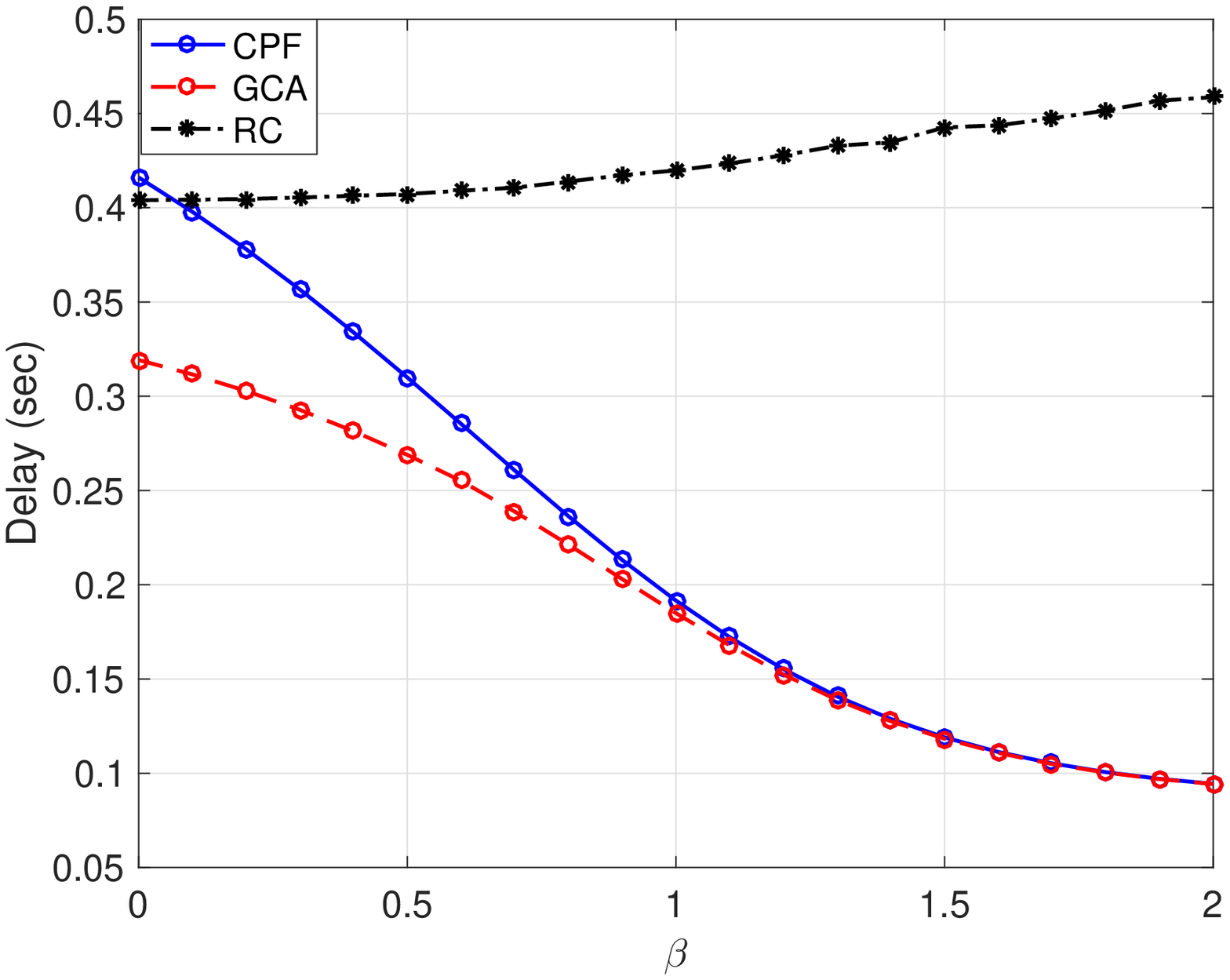}		
  \label{CPF and GCA}}
\caption{Evaluation and comparison of the average delay for the proposed caching schemes and random caching for various system parameters ($R_{D}= 50$ Mbps, $\overline{R_{WL}}= 15$ Mbps, $\overline{R_{BH}}= 10$ Mbps, $N=20$, $\beta=0.5$ for (a) and $\lambda_k = 0.5$ requests/sec for (b)).}	
\label{all}
\end{figure*}
For comparison purposes, Fig.~\ref{all} shows the average delay for the proposed caching schemes and random caching against various system parameters. Fig.~\ref{locally optimal delay vs lambda} shows the network average delay plotted against the request arrival rate $\lambda_k$ for three content placement techniques, namely, GCA, CPF, and random caching (RC).\footnote{Here, we adopt different transmission rates from \cite{basic_principle} and \cite{multi-cell} to provide insights on the difference between the caching schemes, otherwise, the GCA is far superior to the other schemes, with negligible delay.} In RC, content stored in clusters are randomly chosen from the file library. The most popular files are cached in the CPF scheme, and the GCA works as illustrated in Algorithm 1.
We see that the average delay for all content caching strategies increases with $\lambda_k$ since a larger request rate increases the probability of a longer waiting time for each request. It is also observed that the GCA, which is locally optimal, achieves significant performance gains over the CPF and RC solutions for the above setup. 			
Fig.~\ref{CPF and GCA} shows that the GCA is superior to the CPF only for small values of the popularity exponent $\beta$. If the popularity exponent $\beta$ is high enough, CPF and GCA will achieve the same performance. 
When  $\beta$ increases, the CDF of the Zipf's distribution becomes more skewed. This implies that only a smaller portion of the files is highly demanded by the devices. The lower the number of files requested by the devices, the higher the probability of having such files cached in the clusters' VCCs. If all these files are cached locally in each cluster, the global minimum solution for the delay minimization problem is attained. This interpretation explains why when $\beta$ increases, the CPF and GCA solutions converge to the global optimal solution. 
We also note that the CPF and RC schemes roughly achieve the same delay when $\beta = 0$. This stems from the fact that with $\beta = 0$, all files have equal popularity, and correspondingly, CPF is equivalent to RC.
Moreover, RC fails to reduce the delay as $\beta$ increases, since caching files at random results in a low probability of serving the requested files from local clusters. 

Next, we turn our attention to the throughput results in Fig.~\ref{all_1}.
Fig.~\ref{throp_vs_beta} plots the throughput per request as a function of the popularity exponent $\beta$ for the three caching schemes. It is shown that the per request throughput monotonically increases  with $\beta$ for the CPF and GCA schemes, and shows a slight decrease for the RC scheme. When $\beta$ increases for the GCA and CPF, the locally stored files form most of the users' requests that can be delivered via high rate D2D communication. Conversely, for the RC scheme, which caches the files uniformly at random, the probability of having the requested files cached in the local clusters slightly decreases when the popularity of files becomes skewed (higher $\beta$). Due to the resulting lower probability of serving the requests from the local clusters, the throughput per request, in turn, slightly decreases owing to the lower probability of activating D2D links. 
In Fig.~\ref{throp_vs_N}, the throughput per request is plotted against the cluster cache size $N$ for the three caching schemes. It is noticed that for all the caching schemes, the per request throughput is improved with the cluster cache size, and the GCA achieves the highest throughput. This can be explained by the fact that, with large cluster cache size, there is a high opportunity of exchanging cached content via the local cluster mode that exploits the high rate of the D2D communication.   

\begin{figure*}
\centering
  \subfigure[The per request throughput vs popularity exponent for three caching schemes, CPF, GCA, and RC.]{\includegraphics[width=3.0in]{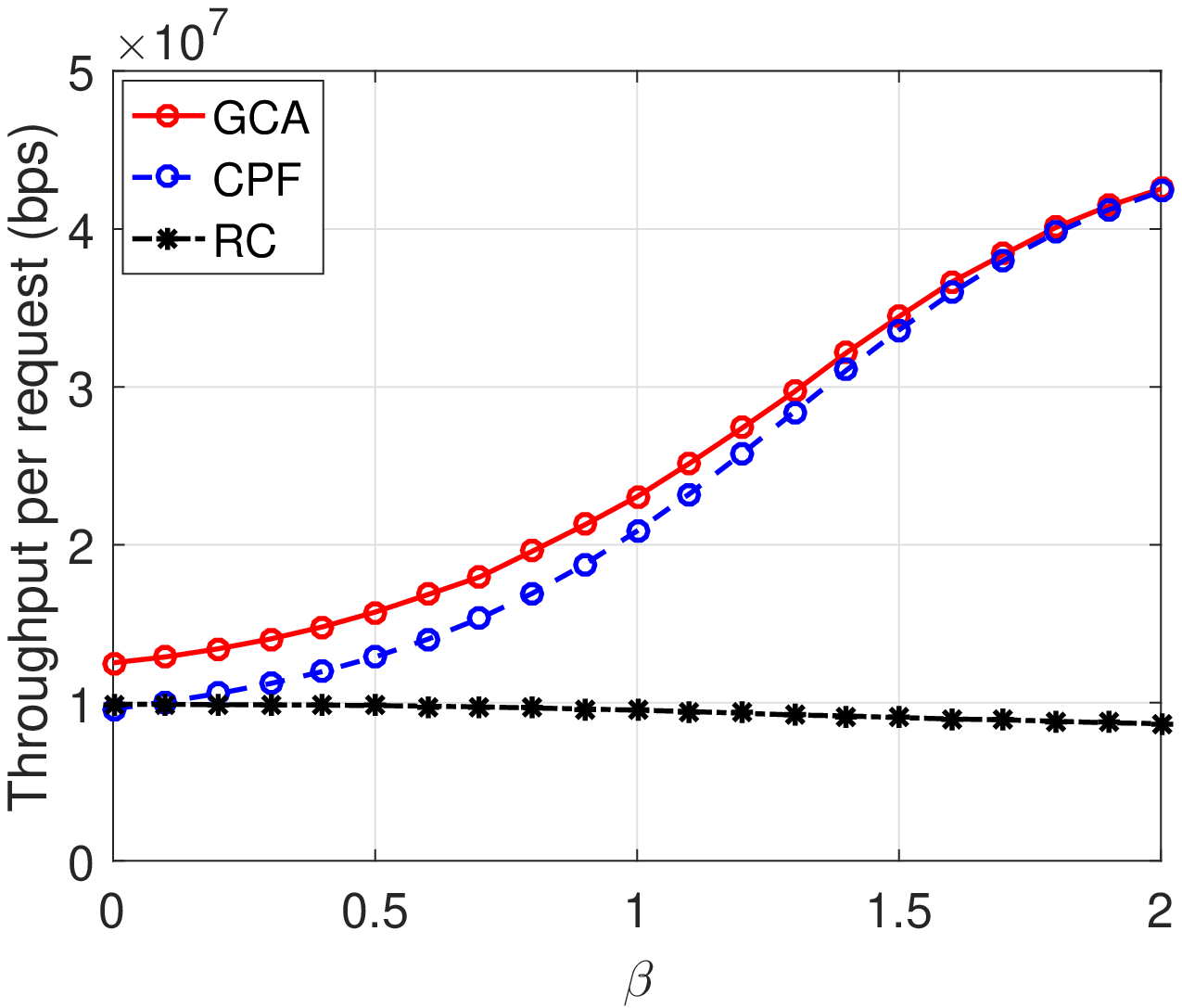}
\label{throp_vs_beta}}
\subfigure[The per request throughput vs cluster cache size for three caching schemes, CPF, GCA, and RC.]{\includegraphics[width=3.0in]{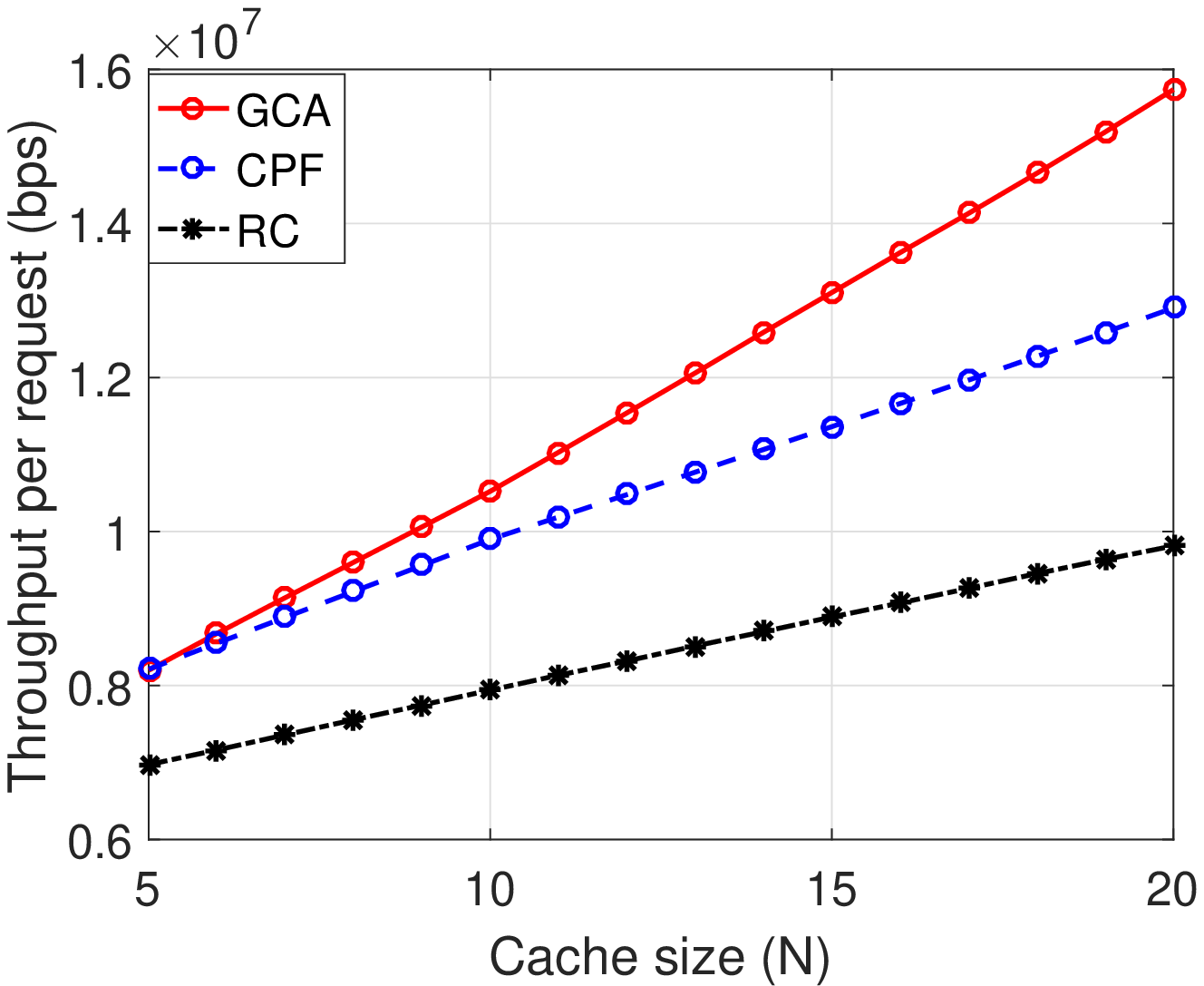}		
  \label{throp_vs_N}}
\caption{Evaluation and comparison of the per request throughput for the proposed caching schemes and random caching for various system parameters ($R_{D}= 50$ Mbps, $\overline{R_{WL}}= 15$ Mbps, $\overline{R_{BH}}= 10$ Mbps, $\lambda_k = 0.5$ requests/sec).}	
\label{all_1}
\end{figure*}

\section{Conclusion}
In this work, we propose a novel D2D caching architecture to reduce the network average delay. We study a cellular network consisting of one SBS and a set of users. The cell is divided into a set of equally-sized virtual clusters, where the users in the same cluster exchange cache content via D2D communication, while the users in different clusters cooperate by exchanging their cache content via cellular transmission. We formulate the delay minimization problem in terms of the content cache placement. However, the problem is NP-hard and obtaining the optimal solution is computationally hard. We then propose two content caching policies, namely, caching popular files and greedy caching. By reformulating the delay minimization problem as a minimization of a non-increasing supermodular function subject to uniform partition matroid constraints, we show that it could be solved using the proposed GCA scheme within a factor $(1 - e^{-1})$ of the optimum. Moreover, we conduct the throughput analysis to investigate the behavior of the average throughput per request under different caching schemes.
We study the scaling behavior of the average sum throughput when the library size asymptotically grows to infinity and show that  the network average sum throughput decreases with the library size increase, and the rate of this decrease is controlled by the popularity exponent. We verify our analytical results by means of extensive simulations and the results show that the network average delay could be reduced by 45\%-80\% by allowing inter-cluster cooperation.
\begin{appendices}
\section{Proof of lemma 1}
We define the ground set that describes the cache placement elements in all clusters as
\begin{equation}
\mathcal{S} = \{s_1^1, ...,s_k^f, ..., s_k^m, ..., s_K^1, ...,  s_K^m\}
\label{set eqn}
\end{equation}
where $s_k^f$ is an element denoting the placement of file $f$ into the VCC of cluster $k$. This ground set can be partitioned into $K$ disjoint subsets $\{S_1, S_2, ..., S_K\}$, where $S_k =  \{s_k^1, s_k^2, ..., s_k^m\}$ is the set of all files that might be placed in the VCC of cluster $k$.

Let us express the cache placement by the adjacency matrix $\textbf{X} = [x_k,_f ]_{K\times m} \in \{0, 1\}_{K\times m} $. Moreover, we define the corresponding cache placement set $A \subseteq \mathcal{S}$ such that $s_k^f \in A$ if and only if $x_{k,f}=1$. Hence, the constraints on the cache capacity of the VCC of cluster $k \in {\mathcal{K}}$ can be expressed as $A \subseteq \mathcal{S}$, where
\begin{equation}
\mathcal{H} = \{A\subseteq \mathcal{S}: |A\cap S_k| \leq N  \textrm{ for  all}\ k = 1, \dots, K\}
\label{matroid eqn}
\end{equation}
The above expression is derived directly from the constraint that the maximum cache size per cluster is $N$ files, i.e., $\sum_{f=1}^{m}x_k,_f \leq N$. Comparing $\mathcal{H}$ in (\ref{matroid eqn}) with the definition of partition matroid in (\ref{mat defn eqn}), it is clear that our constraints form a partition matroid with $l = K$ and $k_i = N$. Additionally, since $k_i = N$ for all $i = \{1, 2,\dots, K\}$, it is easy to see that our constraints also form a uniform partition matroid. 
This proves Lemma 1.

\section{Proof of lemma 2}
We consider two cache placement sets $A$ and  $A'$, where $A \subset A'$. For a certain cluster $k \in \mathcal{K}$, we consider adding the caching element $s_k^f$ $\in \mathcal{S}\setminus A'$ to both placement sets. This means that a file $f$ is added to cluster $k$, where the corresponding cache placement element has not been placed in either $A$ or $A'$. The marginal value of adding an element $s_k^f$ to a set is defined as the change in the file download time after adding this element to the set. The average download time for a file $f$ with mean size $\overline{S}$ is $\frac {\overline{S}}{R_{D}}$, $\frac {\overline{S}}{R_{WL}/N_a}$, or $\frac {\overline{S}}{R_{BH}/N_b}$ if the file is obtained from the local cluster, a randomly chosen remote cluster, or the backhaul, respectively.
For our work, we assume that $\frac{R_{WL}}{{N_a}} > \frac{R_{BH}}{{N_b}}$ always holds. For the sake of simplicity, we replace $\frac{R_{WL}}{{N_a}}$ and $\frac{R_{BH}}{{N_b}}$ with their averages, $\overline{R_{WL}}$ and $\overline{R_{BH}}$, respectively. Now, the aggregate transmission rate assumption is $R_D > \overline{R_{WL}} > \overline{R_{BH}}$.

For $D_k$ in (\ref{T eqn}) to be a supermodular function, the difference in the marginal values between the two sets $A$ and $A'$ must be non-positive.
For a user $u$ belonging to cluster $k$ and requesting content $f \in \mathcal{F}$, we distinguish between these different cases:

\begin{enumerate}

  \item According to placement $A'$, user $u$ obtains file $f$ from a remote cluster $j'$, i.e., $s_{j'}^f \in A'$ and $j' \neq k$. In this case, the marginal value with respect to $A'$ is
\begin{align}
G(A' \cup \{s_k^f\}) - G(A') = 0 						
\end{align}
  According to placement $A$, user $u$ obtains file $f$ from a remote cluster $j$, i.e., $s_{j}^f \in A$, again the marginal value is zero. However, if $s_{j}^f\notin A$, the marginal value is given by
 \begin{align}
G(A \cup \{s_k^f\}) - G(A) = P_{k,f} \Big( \frac {\overline{S}}{\overline{R_{WL}}} -  \frac {\overline{S}}{\overline{R_{BH}}}\Big) 
\end{align}
\item In this case, we assume that $s_i^f = s_k^m$, i.e., the requested file $f$ is cached in cluster $k$. According to placement $A'$, user $u$ obtains file $f$ from the local cluster $k$. Hence, the marginal value is given  by
 \begin{align}
G(A' \cup \{s_i^f\}) - G(A') = P_{k,f} \Big( \frac {\overline{S}}{R_{D}} -  \frac {\overline{S}}{\overline{R_{WL}}}\Big)	
\end{align}
According to placement $A$, user $u$ obtains file $f$ from a remote cluster $j$ when $s_{j}^f \in A$, again the marginal value is given  by
  \begin{align}
G(A \cup \{s_i^f\}) - G(A) = P_{k,f} \Big( \frac {\overline{S}}{{R_{D}}} -  \frac {\overline{S}}{\overline{R_{WL}}}\Big)	
\end{align}
However, if $s_{j}^f\notin A$, the marginal value is written  as
 \begin{align}
G(A \cup \{s_i^f\}) - G(A) = P_{k,f} \Big( \frac {\overline{S}}{R_{D}} -  \frac {\overline{S}}{\overline{R_{BH}}}\Big) 
\end{align}
\end {enumerate}
Accordingly, the difference in marginal values between $A$ and $A'$ in all cases is
\begin{align}
G(A \cup \{s_i^f\}) - G(A) - (G(A' \cup \{s_i^f\}) - G(A')) \leq 0				
\end{align}
It is clear that $g(A) \leq g(A')$ for $A \subseteq A'  \subseteq \mathcal{S}$, or equivalently, $g(A) - g(A') \leq 0$. From the definition of supermodularity, it is clear that the delay per request in the $k-$th cluster, $D_k$, is a supermodular set function. The weighted sum of supermodular functions is also a supermodular function \cite{solnmono2}, and so the network average delay $D$ in (\ref{optimize eqn}) is a supermodular function. For the monotone non-increasing property, it is intuitive to see that the delay will never increase by caching new files. Hence, Lemma 2 proves that problem (\ref{optimize eqn}) is a monotonically non-increasing supermodular set function minimized under uniform partition matroid constraints.
\end{appendices}

\bibliographystyle{IEEEtran}
\bibliography{bibliography}
\begin{IEEEbiography}[{\includegraphics[width=1in,height=1.25in,clip,keepaspectratio]{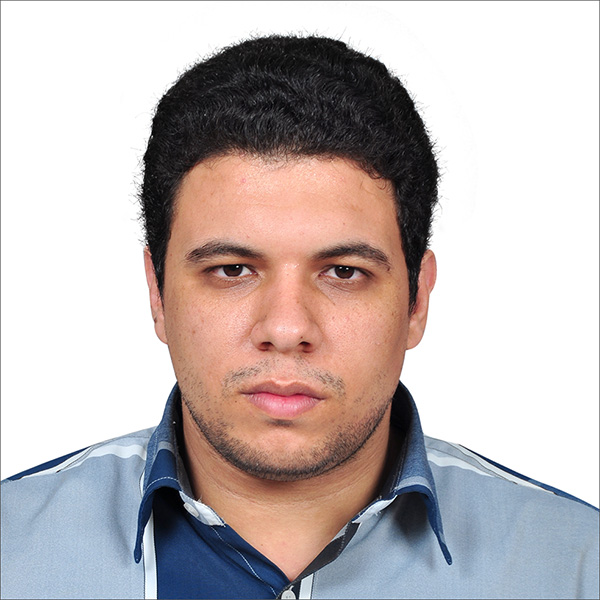}}]{Ramy Amer} received the MSc degree in Electrical Engineering from Alexandria University, Egypt, in 2016. Since 2016, he is pursuing his PhD degree in wireless communication, especially, wireless caching, under the supervision of Dr Nicola Marchetti and Dr M. Majid Butt at CONNECT centre, Trinity College Dublin, Ireland. His research interests include cross-layer design, Cognitive Radio, Energy Harvesting, Stochastic Geometry, and Reinforcement Learning. Prior to CONNECT, he was also an assistant lecturer for the switching department, National Telecommunication Institute of Egypt (NTI), Cairo, Egypt, where he conducted professional training both at the national and international levels. He is also certified as a Cisco instructor and has other Cisco data and voice certificates. He worked as a part-time instructor for many national and international training centres, e.g., New-Horizon Egypt and Fast-Lane KSA. 
\end{IEEEbiography}
\begin{IEEEbiography}[{\includegraphics[width=1in,height=1.25in,clip,keepaspectratio]{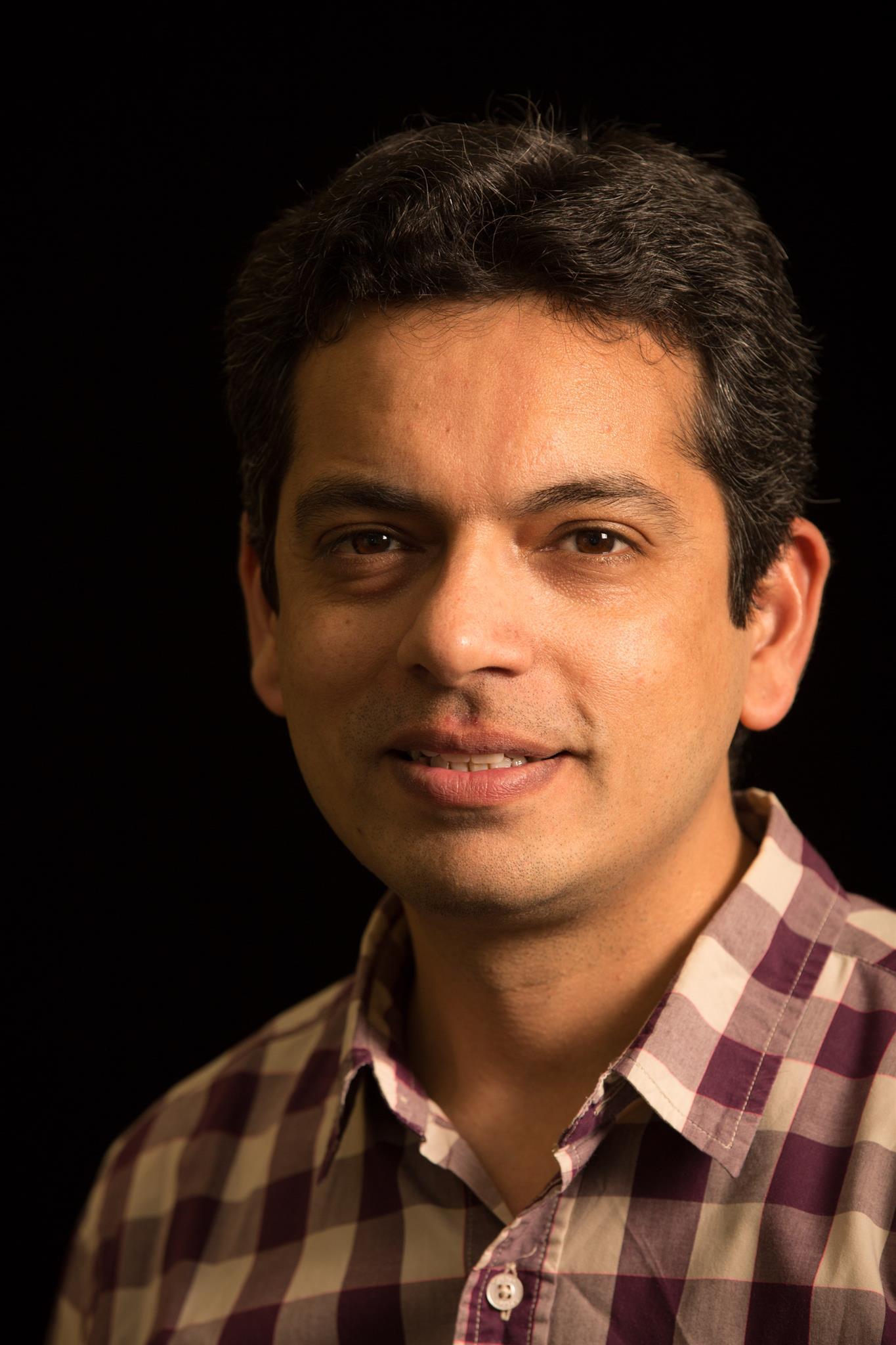}}]{M. Majid Butt} (S'07 -- M'10 -- SM'15) received the MSc degree in Digital Communications from Christian Albrechts University, Kiel, Germany, in 2005, and the PhD degree in  Telecommunications from the Norwegian University of Science and Technology, Trondheim, Norway, in 2011. He is an Assistant Professor at University of Glasgow as well as an adjunct Assistant Professor at Trinity College Dublin, Ireland. Before that, he has held senior researcher positions at Trinity College Dublin, Ireland and Qatar University. He is recipient of Marie Curie Alain Bensoussan postdoctoral fellowship from European Research Consortium for Informatics and Mathematics (ERCIM). He held ERCIM postdoc fellow positions at Fraunhofer Heinrich Hertz Institute, Germany, and University of Luxembourg. Dr. Majid's major areas of research interest include communication techniques for wireless networks with focus on radio resource allocation, scheduling algorithms, energy efficiency and cross layer design. He has authored more than 50 peer reviewed conference and journal publications in these areas. He has served as TPC chair for various communication workshops in conjunction with IEEE WCNC, ICUWB, CROWNCOM, IEEE Greencom and Globecom. He is a senior member of IEEE and serves as an associate editor for IEEE Access journal and IEEE Communication Magazine since 2016.
\end{IEEEbiography}
\begin{IEEEbiography}[{\includegraphics[width=1in,height=1.25in,clip,keepaspectratio]{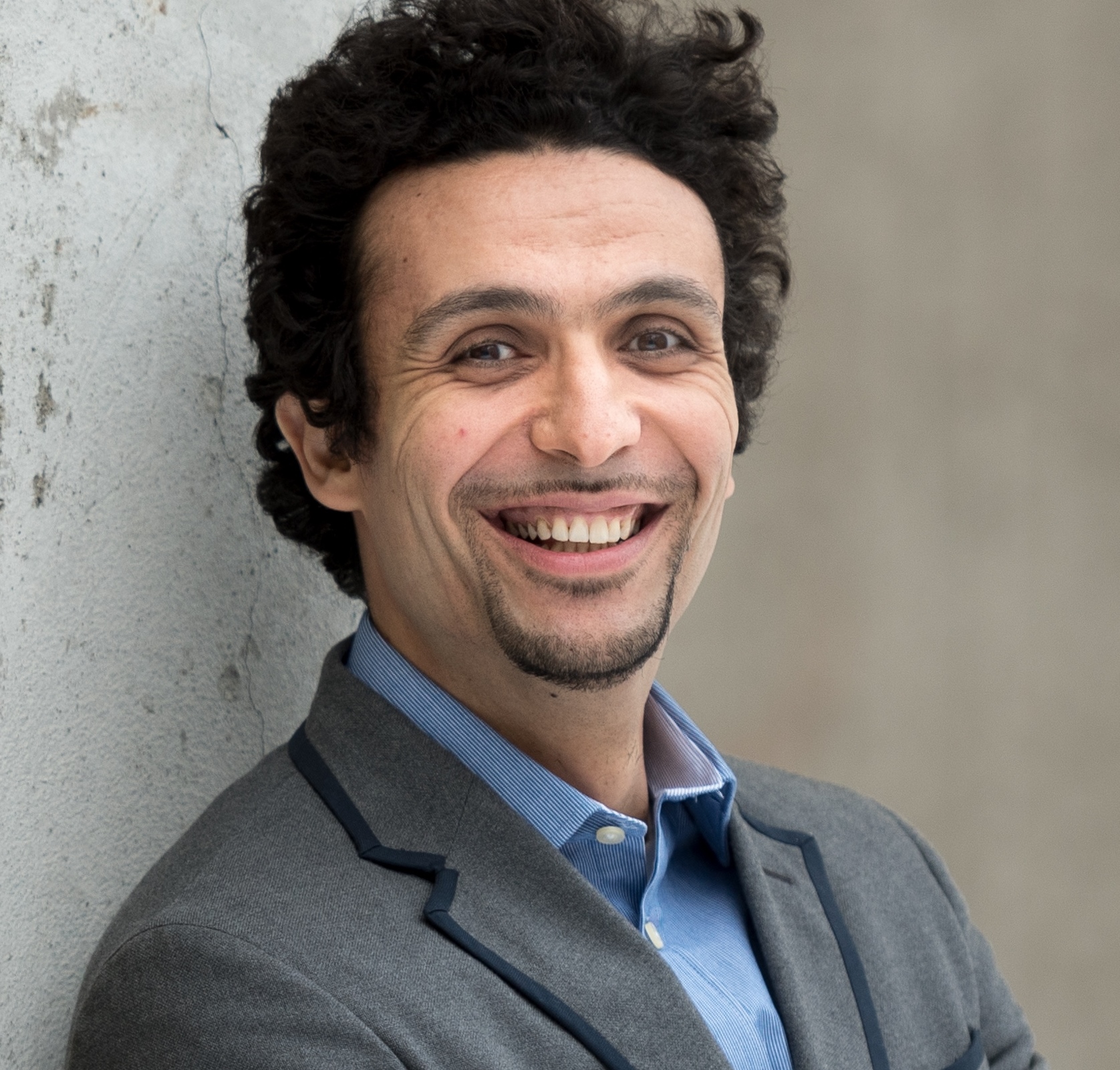}}]{Mehdi Bennis} (S'07 -- AM'08 -- SM'15) received his M.Sc. degree in electrical engineering jointly from EPFL, Switzerland, and the Eurecom Institute, France, in 2002. He obtained his Ph.D. from the University of Oulu in December 2009 on spectrum sharing for future mobile  cellular systems. Currently he is an associate professor at the  University of Oulu and an Academy of Finland research fellow.  His main research interests are in radio resource management, heterogeneous networks, game theory, and machine learning  in 5G networks and beyond. He has co-authored one book  and published more than 200 research papers in international  conferences, journals, and book chapters. He was the recipient  of the prestigious 2015 Fred W. Ellersick Prize from the IEEE  Communications Society, the 2016 Best Tutorial Prize from  the IEEE Communications Society, the 2017 EURASIP Best  Paper Award for the Journal of Wireless Communications and  Networks, and the all-University of Oulu research award.
\end{IEEEbiography}
\begin{IEEEbiography}[{\includegraphics[width=1in,height=1.25in,clip,keepaspectratio]{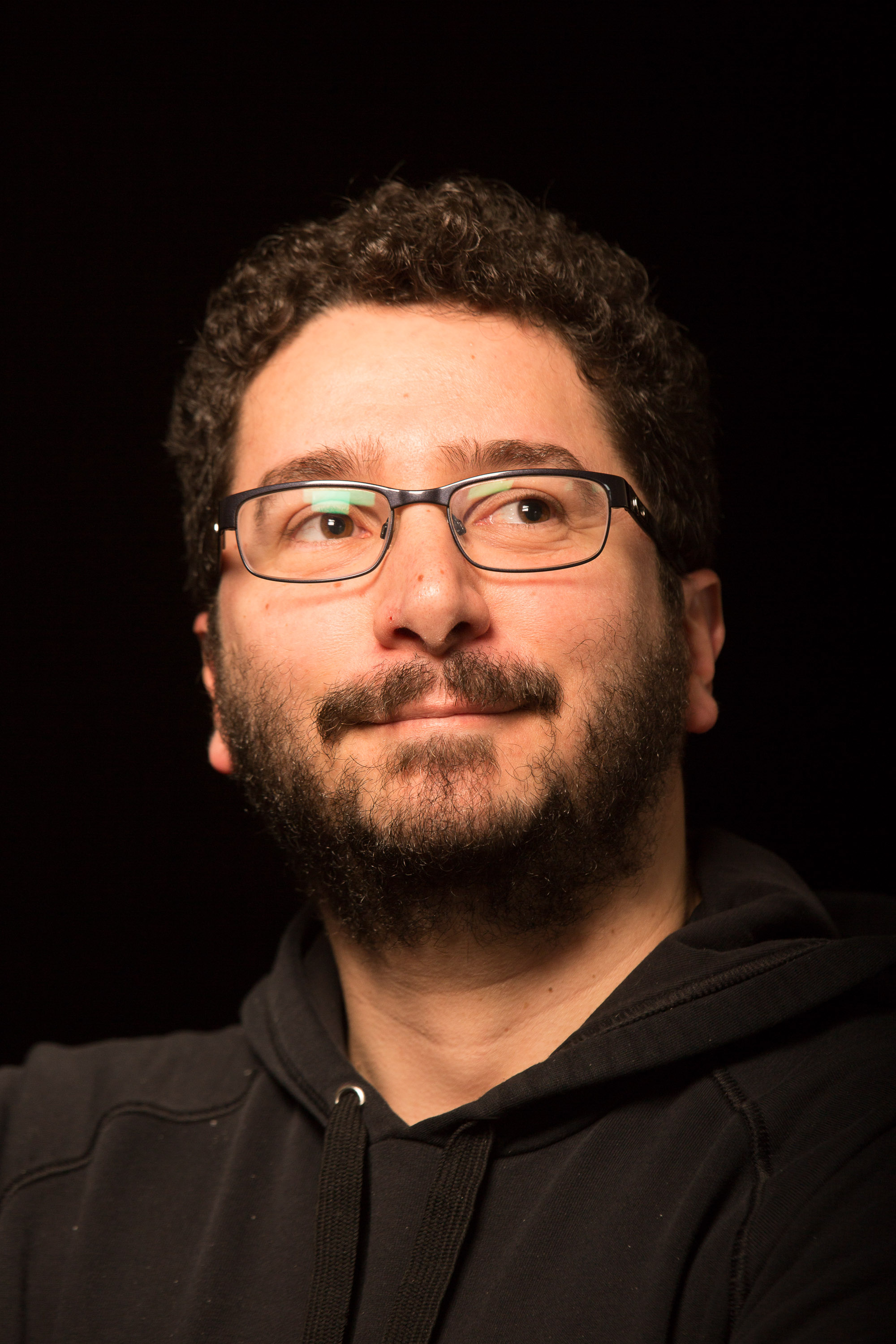}}]{Nicola Marchetti} Dr. Nicola Marchetti is currently Assistant Professor in Wireless Communications at Trinity College Dublin, Ireland. He performs his research under the Trinity Information and Complexity Labs (TRICKLE) and the Irish Research Centre for Future Networks and Communications (CONNECT). He received the PhD in Wireless Communications from Aalborg University, Denmark in 2007, and the M.Sc. in Electronic Engineering from University of Ferrara, Italy in 2003. He also holds an M.Sc. in Mathematics which he received from Aalborg University in 2010. His collaborations include research projects in cooperation with Nokia Bell Labs and US Air Force Office of Scientific Research, among others. His research interests include Adaptive and Self-Organizing Networks, Complex Systems Science for Communication Networks, PHY Layer, Radio Resource Management. He has authored 110 journals and conference papers, 2 books and 7 book chapters, holds 2 patents, and received 4 best paper awards.
\end{IEEEbiography}

\end{document}